\documentclass{article}
\usepackage{amsmath,amsthm,amssymb,amscd,amstext,amsfonts}
\usepackage{color,verbatim,graphicx,fullpage,url}
\usepackage[dvipsnames]{xcolor}
\usepackage{algorithm}
\usepackage{algorithmic}
\usepackage{nicefrac}
\usepackage{fullpage,tikz,enumitem}
\usepackage{tikz-cd}
\usepackage{bm}
\usepackage{physics}
\usepackage{todonotes}
\usepackage{mathtools}
\usepackage{thm-restate}

\usepackage[breaklinks=true, linktocpage=true,]{hyperref}
\hypersetup{
    colorlinks = true,
    linkcolor={purple},
    urlcolor={purple},
    citecolor={blue!80!black}
}

\usepackage[nameinlink, noabbrev, capitalize]{cleveref}

\usepackage[utf8]{inputenc}
 
\Crefname{enumi}{Property}{Properties}

\theoremstyle{plain}

\newtheorem{thm}{Theorem}[section]
\newtheorem{theorem}[thm]{Theorem}

\crefname{atheorem}{Theorem}{Theorems}
\Crefname{atheorem}{Theorem}{Theorems}

\newtheorem{conjecture}[thm]{Conjecture}

\newtheorem{corollary}[thm]{Corollary}
\newtheorem{lemma}[thm]{Lemma}
\newtheorem{proposition}[thm]{Proposition}
\newtheorem{definition}[thm]{Definition}

\newtheorem{problem}[thm]{Problem}

\newtheorem*{claim*}{Claim}

\theoremstyle{remark}
\newtheorem{remark}[thm]{Remark}

\renewcommand\norm[1]{\left|\!\left|#1\right|\!\right|}

\newcommand\normF[1]{\left|\!\left|#1\right|\!\right|_{\rm F}}

\newcommand{\inp}[2]{{\left\langle #1,#2 \right\rangle}}            

\def\1{\mathbf{1}} 
\def\0{\mathbf{0}}

\DeclareMathOperator{\Ex}{\mathbb{E}}           
\DeclareMathOperator{\conv}{conv}     
\DeclareMathOperator{\vol}{vol}  
\DeclareMathOperator{\Parity}{Parity}     

\DeclareMathOperator{\RR}{\mathbb{R}}

\newcommand{\cX}{\mathcal X}

\DeclareMathOperator{\sign}{sgn}

\DeclareMathOperator{\signrank}{signrk} 
\DeclareMathOperator{\strank}{srk}

\DeclareMathOperator{\Maj}{Maj}
\DeclareMathOperator{\PG}{PG}
\DeclareMathOperator{\rect}{rc}

\usepackage{ifthen}
\newcommand{\agnorm}[2][]{
	\ifthenelse{\equal{#2}{}}{
		\widetilde{\gamma}_2^{#1}
	}{
		\widetilde{\gamma}_2^{#1}(#2)
	}
}

\DeclareFontFamily{U}{mathx}{}
\DeclareFontShape{U}{mathx}{m}{n}{<-> mathx10}{}
\DeclareSymbolFont{mathx}{U}{mathx}{m}{n}
\DeclareMathAccent{\widecheck}{0}{mathx}{"71}

\newcommand{\cC}{\mathcal{C}}
\newcommand{\cQ}{\mathcal{Q}}

\newcommand{\cT}{\mathcal{T}}

\DeclareMathOperator{\VCdim}{\textsc{vc}}

\DeclareMathOperator{\loss}{loss}                                             

 \DeclareMathOperator{\proj}{proj}

\begin{document}

\title{Lower Bounds for Approximate Sign Rank}

\author{
     Riju Bindua \and Hamed Hatami \and Hasti Karimi \and Robert Robere\thanks{McGill University, \texttt{\{riju.bindua, hamed.hatami, hasti.karimi, robert.robere\}@mcgill.ca}. Hamed Hatami and Robert Robere are supported by NSERC.}
}
 
\date{\today}

\maketitle

\begin{abstract}
We prove new upper and lower bounds on \emph{$\epsilon$-approximate sign-rank}, a relaxation of sign-rank introduced by Chornomaz, Moran, and Waknine \cite{10.1145/3717823.3718213}. 
We prove that every $m \times n$ sign matrix with approximate sign-rank~$d$ contains a monochromatic rectangle of size $d^{-O(d)}m \times d^{-O(d^2)}n$, paralleling classical results for exact sign-rank.  
As an application of our rectangle theorem, we establish a lower bound of $\Omega_{\epsilon,\gamma}(\sqrt{d/\log d})$ on the $\epsilon$-approximate sign-rank of $\gamma$-margin $d$-dimensional half-spaces.
Prior to our work, the only general lower bound technique known for approximate sign-rank yielded lower bounds of strength $\epsilon^{-1} - 1$, which are constant for fixed~$\epsilon$.

A key ingredient in the proof of our monochromatic rectangle theorem is a new geometric theorem on hyperplane avoidance. We show that for any set of $n$ points in general position in~$\mathbb{R}^d$, there exist $d$ subsets, each of size $d^{-O(d)}\, n$, such that no hyperplane simultaneously splits all of them. The proof of the geometric theorem combines the Forster--Barthe \cite{MR1964645, MR1650312} isotropic position theorem, with the Bourgain--Tzafriri \cite{MR890420} restricted invertibility principle.

Next, we study the relationship between approximate sign-rank and VC dimension. 
We prove a lower bound on approximate sign-rank in terms of VC dimension, and exhibit concept classes of VC dimension $2$ with large approximate sign-rank.

Finally, we study the approximate sign-rank of the $2^m \times 2^m$ Hadamard matrix $H_m$.
The sign-rank of $H_m$ is known to be $\Omega(\sqrt{2^m})$ by Forster's classic theorem \cite{MR1964645}.
Contrasting this result, we adapt an argument of Alman and Williams \cite{10.1109/FOCS.2015.18, AlmanW17} to show that the approximate sign-rank of $H_m$ is at most $m^{O(\sqrt{m} \log (1/\epsilon))}$, and hence the Hadamard matrix does not witness polynomial-strength lower bounds for approximate sign-rank.
By using our VC dimension bound, we prove that the approximate sign-rank of $H_m$ is at least $\Omega_\epsilon(m)$.

\end{abstract}

\pagebreak 

\section{Introduction}

If $A \in \{\pm 1\}^{m \times n}$ is an $m \times n$ matrix then the \emph{sign-rank} of $A$, denoted $\signrank(A)$, is the minimum rank of a real matrix $B$ with $\sign(B_{i,j}) = A_{i,j}$ for all entries $i,j$.  
Sign-rank was introduced by Paturi and Simon~\cite{paturi1986probabilistic} in the context of communication complexity, and has become a fundamental quantity of study in theoretical computer science, with further connections to circuit complexity, combinatorics, discrete geometry, and Banach space theory.
One particular domain in which sign-rank has had a strong impact is \emph{learning theory}, where it represents the minimum dimension required to represent a concept class by an arrangement of points and hyperplanes. 
Given a concept class $\cC \subseteq \{\pm 1\}^{\cX}$ over a finite domain~$\cX$, we define  the $\signrank(\cC)$ to be the sign rank of the $|\cC| \times |\cX|$ matrix $A$ defined by $A_{c, x} = c(x)$. 
Equivalently, the sign rank of $\cC$ is
the smallest~$d$ for which there exist embeddings $\{u_c \in \RR^d\}_{c \in \cC}$ and $\{v_x \in \RR^d\}_{x \in \cX}$ satisfying
\begin{equation}
\label{eq:dimension_complexity}
c(x) = \sign \inp{u_c}{v_x}
\quad \text{for all } c \in \cC,\; x \in \cX.
\end{equation}

\paragraph{Approximate sign-rank.}
Since learning inherently tolerates error, requiring the exact linear realization of a concept class $\cC$ in \eqref{eq:dimension_complexity} can be unnecessarily restrictive. To address this, Chornomaz, Moran, and Waknine~\cite{10.1145/3717823.3718213} introduced a natural relaxation of sign-rank called \emph{approximate sign-rank}\footnote{To be distinguished from \emph{probabilistic} sign-rank (cf.~\cite{AlmanW17, KamathMS20}), which is a different measure. See \cref{sec:related-work} for details.}, which tolerates a small amount of classification error in the representation.
To state the definition of approximate sign-rank, we first need the notion of a realizable distribution. 
A distribution over $[n] \times \{\pm 1\}$ is \emph{realizable} by the $i$-th row of a sign matrix $A \in \{\pm 1\}^{m \times n}$ if every pair $(j,b)$ in its support satisfies $b = A_{i,j}$. Such a distribution is naturally determined by a pair $(i,\mu)$, where $i \in [m]$ is the row index and $\mu$ is a distribution over $[n]$; the label of column $j$ is then simply $A_{i,j}$. 

\begin{definition}[Approximate sign-rank~\cite{10.1145/3717823.3718213}]
\label{def:approx_sign_rank}
For $\epsilon \ge 0$, the \emph{$\epsilon$-approximate sign-rank} of a sign matrix $A \in \{\pm 1\}^{m \times n}$, denoted $\signrank_\epsilon(A)$, is the smallest $d$ for which there exist vectors $\{v_j \in \RR^d\}_{j \in [n]}$ such that for every $i \in [m]$ and every distribution $\mu$ over $[n]$, there exists $u_i \in \RR^d$ satisfying
\[
\Pr_{j \sim \mu}\bigl[\sign \inp{u_i}{v_j} \neq A_{i,j}\bigr] \;\le\; \epsilon.
\]
\end{definition}
For a binary concept class $\cC$ over a finite domain $\cX$, we write $\signrank_\epsilon(\cC) \coloneqq \signrank_\epsilon(A)$ where $A$ is the sign matrix with $A_{c,x} = c(x)$. To calibrate \Cref{def:approx_sign_rank}, observe that $\signrank_0(A) = \signrank(A)$, while at the other extreme, $\signrank_\epsilon(A) = 1$ for every $\epsilon \ge 1/2$, since for each row $i$, one of the constant hypotheses $\pm 1$ always achieves error  
at most $1/2$.

The approximate sign-rank can be much smaller than the exact sign-rank. Chornomaz, Moran, and Waknine established a quasipolynomial separation:

\begin{theorem}[{\cite[Theorem 4]{10.1145/3717823.3718213}}]\label{thm:cmw-separation}
    For every integer $d \geq 0$, there is a finite concept class $\mathcal{C}_d$ such that $\signrank_{1/3}(\cC_d) \leq d$, while $\signrank(\cC_d) = d^{\Omega(\log d)}$.
\end{theorem}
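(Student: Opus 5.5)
This result is cited from Chornomaz--Moran--Waknine, and a natural route to it goes through a separation between \emph{sign-rank} and a learnability or margin-type parameter. The plan has three parts: pick a class whose sign-rank is known to be quasipolynomially large while it remains "simple" in a learning-theoretic sense; show that simplicity gives small approximate sign-rank; and then reparametrize.

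\textbf{Step 1: the upper bound on approximate sign-rank.} Since \Cref{def:approx_sign_rank} allows $u_i$ to depend on the distribution $\mu$, a minimax argument applies. Fix the column embedding $\{v_j\}$. Small approximate sign-rank then follows if, for every row $i$ and every $\mu$, some halfspace through $v_j$ agrees with row $i$ on $1-\epsilon$ of the $\mu$-mass. I would exploit this as follows. Take a class whose rows are \emph{weakly} representable: there is an embedding in dimension $d$ in which every row correlates with some halfspace with advantage $\gamma$ under every distribution. An example is a class with large margin only in an averaged sense, such as intersections or majorities of a few halfspaces. Then apply boosting: a majority of $O(\gamma^{-2}\log(1/\epsilon))$ weak halfspaces approximates the row. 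The difficulty is that a majority of halfspaces is not a halfspace. So instead I would pick the class so that under any $\mu$ a single halfspace in $d$ dimensions already achieves error $1/3$. A concrete candidate is $\cC_d$ defined as the \emph{AND of two halfspaces} over points in $\RR^{d/2}$, or a subclass of the Razborov--Sherstov type. For such a class, one coordinate block suffices when $\mu$ concentrates appropriately, and a constant hypothesis suffices otherwise. To enforce error $\le 1/3$ for all $\mu$, I would use a case split on the mass of the positive region.

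\textbf{Step 2: the lower bound on sign-rank.} I would invoke known exact lower bounds. Razborov--Sherstov give $\exp(\Omega(N^{1/3}))$ for depth-2 $\mathrm{AC}^0$, which is too strong a separation to coexist with small approximate rank. The more fitting tool is the quasipolynomial lower bound on the sign-rank of intersections or compositions of low-dimensional halfspaces, as in the Sherstov bounds for the AND of two majorities, together with the Forster-type machinery. This yields $\signrank(\cC_d) = d^{\Omega(\log d)}$ after setting parameters so that the ambient dimension is $d$.

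\textbf{Main obstacle.} The main obstacle is Step 1: guaranteeing a single halfspace in dimension $d$ with error at most $1/3$ \emph{simultaneously for every} distribution $\mu$. The dual, via von Neumann, asks that no mixture of columns fool all halfspaces in the embedding. I would attempt this by adding auxiliary coordinates to the embedding, such as indicator-like features for each of the two constituent halfspaces, so that for any $\mu$ either the first or second constituent halfspace alone, or a constant, errs on at most a third of the mass. This uses the fact that an AND of two sets is contained in each, so its disagreement with either is bounded by the mass of the other's complement, and a pigeonhole over three options gives $1/3$.
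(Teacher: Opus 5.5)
Your final plan is essentially the construction the paper sketches. Since $h_1\wedge h_2=\Maj(h_1,h_2,-1)$, the identity embedding of the points, plus one fixed coordinate so that the constant $-1$ is a homogeneous halfspace, gives $\signrank_{1/3}\le d$: for every $\mu$ one of $h_1$, $h_2$, $-1$ errs with probability at most $1/3$, and the $d^{\Omega(\log d)}$ lower bound is the cited one for intersections of two halfspaces (the paper takes $\cC_d$ to be majorities of three homogeneous halfspaces, a class containing these intersections). Two corrections: the $1/3$ follows because the three error events $h_1\setminus h_2$, $h_2\setminus h_1$, $h_1\cap h_2$ are pairwise disjoint, which your bound by the mass of the other halfspace's complement does not establish; and the row-dependent ``indicator features'' are neither needed nor allowed, since the column vectors must be shared by all rows.
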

On the other hand, despite the close relationship between the two notions, known lower bound techniques for exact sign-rank do not appear to extend to the approximate setting. The only available general lower bound on approximate sign-rank is a Helly-type argument given in~\cite{10.1145/3717823.3718213}, which yields
\[
\signrank_\epsilon(\cC) \;\ge\; \frac{1}{\epsilon} - 1
\]
whenever $\signrank(\cC) \ge \tfrac{1}{\epsilon} - 1$. For any fixed $\epsilon$, this bound is merely a constant.

\subsection{Our Results}

Our main theorem is a new method for proving lower bounds on approximate sign-rank.
Namely, we show that any matrix with low approximate sign-rank must contain a large monochromatic rectangle. This extends classical results of Alon, Pach, Pinchasi, Radoi\v{c}i\'c, and Sharir~\cite{ALON2005310} and Fox, Pach, and Suk~\cite{fox2016polynomial} from exact to approximate sign-rank.

Combining this structural theorem with a result of Frankl and R\"odl~\cite{MR871675}, we obtain the first poly-logarithmic lower bound on the approximate sign-rank of large-margin half-spaces. To our knowledge, this is also the first super-constant lower bound on the approximate sign-rank of any concept class with constant VC dimension, and our bound is loose by a factor of at most $\tilde O(\sqrt{\log n})$, where $n$ is the dimension of the matrix.

We prove similar strength lower bounds on the approximate sign-rank of the $2^m \times 2^m$ Hadamard matrix $H_m$, showing that $\signrank_\epsilon(H_m) \geq \Omega_\epsilon(m)$. 
However, while the sign-rank of $H_m$ is known to be $\signrank(H_m) \geq \sqrt{2^m}$, we adapt a technique of Alman and Williams \cite{AlmanW17} to prove a surprising upper bound of $m^{O(\sqrt{m} \log 1/\epsilon)}$.
Thus, the Hadamard matrix cannot exhibit polynomial-strength lower bounds on approximate sign-rank.

Finally, we prove that approximate sign-rank is at least linear in the VC dimension, and exhibit a strong separation in the reverse direction: there exist matrices of VC dimension two whose approximate sign-rank grows polynomially in the matrix size.

At the core of our approach is a new geometric theorem on hyperplane avoidance, which we believe is of independent interest: given any set of points in general position, one can find a small number of large subsets such that every homogeneous hyperplane avoids at least one of them. The classical Yao--Yao partition~\cite{YaoYao} achieves a similar avoidance property but requires exponentially many parts in the dimension; our theorem reduces this to linearly many, at the cost of not producing a full partition.

\subsubsection{Large Monochromatic Rectangles and Large-Margin Halfspaces}

A key structural property of low-sign-rank matrices is that such matrices necessarily contain large \emph{monochromatic rectangles}, meaning submatrices in which either every entry is $+1$ or every entry is $-1$.
This phenomenon was first established in~\cite{ALON2005310}, where it was shown that every $m \times n$ sign matrix of sign-rank $d$ contains a monochromatic rectangle of size
\[
\frac{m}{2^{d+1}} \times \frac{n}{2^{d+1}}.
\]
A different $\Omega(m) \times \Omega(n)$ lower bound was later obtained by Fox, Pach, and Suk~\cite{fox2016polynomial} using Chazelle’s hyperplane cutting lemma~\cite{MR1194032}, and a third proof was subsequently given in~\cite{MR4494342}. 
In this work, we extend this property to matrices with small approximate sign-rank and show that it holds even for partially defined matrices.

For a partial sign matrix $A \in \{\pm 1, \star\}^{m \times n}$, a distribution over $[n] \times \{\pm 1\}$ is realizable by the $i$-th row of $A$ if it is supported on pairs $(j, A_{i,j})$ with $A_{i,j} \neq \star$. As before, such a distribution is determined by the row index $i$ together with a distribution $\mu$ over the non-$\star$ columns of row $i$, and the approximate sign-rank of $A$ is defined as in \Cref{def:approx_sign_rank}.

\begin{theorem}[Large monochromatic rectangles]
\label{thm:main_rect}
Let $A \in \{\pm 1,\star\}^{m \times n}$ be a partial sign matrix with $\signrank_\epsilon(A) \le d$ for some $\epsilon \in (0,1/2)$. Then there exist subsets $R \subseteq [m]$ and $S \subseteq [n]$ with
\[
|R| \;\ge\; d^{-C_\epsilon \cdot d}\, m
\qquad\text{and}\qquad
|S| \;\ge\; d^{-C_\epsilon \cdot d^2}\, n,
\]
where $C_\epsilon \coloneqq O(\log(1/(1-2\epsilon)))$, such that every non-$\star$ entry of $A_{R \times S}$ has the same sign.
\end{theorem}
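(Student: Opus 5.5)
We use the column embedding $\{v_j\}_{j\in[n]}\subseteq\RR^d$ from \Cref{def:approx_sign_rank}, combined with the geometric hyperplane-avoidance theorem announced in the introduction. After a tiny generic perturbation (which keeps every nonzero sign $\sign\inp{u}{v_j}$ for the finitely many relevant $u$), we may assume the $v_j$ are in general position. The avoidance theorem then gives $d$ subsets $T_1,\dots,T_d\subseteq[n]$, each of size $d^{-O(d)}n$, such that no homogeneous hyperplane splits all of them. Concretely, for every $u\in\RR^d$ there is some $k\in[d]$ with $\sign\inp{u}{v_j}$ constant (and nonzero, up to perturbation) on $T_k$.

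\emph{Step 1 (rows).} For each row $i$, take $\mu$ uniform on $\bigcup_k T_k$ restricted to the non-$\star$ entries of row $i$. If that set is small, put weight on the $\star$ entries as well, since those cost nothing. Let $u_i$ be the classifier with error at most $\epsilon$. Some $T_{k(i)}$ lies entirely on one side of $u_i$, so $u_i$ predicts a constant $b_i$ on it. Pigeonholing over $(k,b)\in[d]\times\{\pm1\}$ yields a set $R_0$ of at least $m/(2d)$ rows sharing the same $(k,b)$.

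The problem is that on $T_k$, row $i$ agrees with $b$ only up to the error budget. Averaged, this gives that a $1-O(d\epsilon)$ fraction of the non-$\star$ entries of $A_{i,T_k}$ equal $b$. That is useless when $d\epsilon>1$, and it is not exact monochromaticity anyway.

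\emph{Step 2 (boosting via reweighting).} To fix this, we run the argument iteratively, in a boosting/density-increment style. This is where $C_\epsilon=O(\log(1/(1-2\epsilon)))$ should enter, since $1-2\epsilon$ is the advantage over a constant hypothesis. Let $T=T_k$. For each $i\in R_0$, choose $\mu$ concentrated on the non-$\star$ entries of $A_{i,T}$ that disagree with $b$, mixed with weight on the agreeing part. An $\epsilon$-accurate $u_i$ must then correctly classify a constant fraction of the "bad" entries. Applying the avoidance theorem again to the points in $T$ yields subsets $T'_1,\dots,T'_d\subseteq T$ of relative size $d^{-O(d)}$, and for each row some $T'_{k'}$ lies on one side of $u_i$. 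Pigeonholing again keeps a $1/(2d)$ fraction of the rows.

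Set up a potential, namely the fraction of non-$\star$ entries in the current rectangle whose sign differs from the majority. Each round should multiply this potential by a factor $<1$ depending only on $\epsilon$. After $O(C_\epsilon d)$ rounds the potential drops below $1/|T|$, i.e.\ the rectangle is exactly monochromatic on non-$\star$ entries. Alternatively, once the fraction of bad entries in each row is at most $\epsilon'$ we can stop with a direct argument: choose $\mu$ uniform on that row's bad entries, so that $u_i$ must get most of them right, contradicting constancy on a sub-piece.

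\emph{Accounting.} Each round costs a factor $d^{-O(d)}$ on columns and $1/(2d)$ on rows. After $O(C_\epsilon d)$ rounds this gives $|S|\ge d^{-O(C_\epsilon d^2)}n$ and $|R|\ge (2d)^{-O(C_\epsilon d)}m=d^{-O(C_\epsilon d)}m$, which matches the stated asymmetry between the row and column bounds.

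The main obstacle is the reweighting step. Each round must give a genuine multiplicative decrease in the disagreement, uniformly for all surviving rows, even though different rows choose different sub-pieces $T'_{k'}$. It must also handle rows whose non-$\star$ support on $T$ is tiny. I would first try choosing $\mu$ to put mass $1/2$ on the disagreeing entries and $1/2$ on the agreeing ones. With this choice, a constant predictor on a sub-piece errs with probability bounded by the sub-piece's bad-mass proportion, and an $\epsilon$-accurate classifier forces the chosen sub-piece to have markedly lower bad density by a factor tied to $1-2\epsilon$. Rows with almost no non-$\star$ entries in $T$ would be handled by simply allowing them, since their entries are vacuously consistent.
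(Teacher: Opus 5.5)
Your proposal has a genuine gap, and it sits in the reweighting step you flag as the main obstacle.

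\textbf{The single-classifier approach cannot control the constant piece.} With one deterministic $u_i$ per chosen $\mu$, you only bound the $\mu$-averaged error. Meanwhile the avoidance theorem, not you, decides which piece $u_i$ is constant on. In Step~2 the sub-pieces $T'_{k'}$ have relative size only $d^{-O(d)}$ in $T$, and row $i$ may have even fewer non-$\star$ entries there. So under your $\mu$ the piece on which $u_i$ is constant can have mass far below $\epsilon$. Then $u_i$ can be wrong on all of it at no cost, and the bad entries of row $i$ can sit there. The bad density on the surviving piece need not decrease; it can increase. Your $1/2$--$1/2$ reweighting does not change this.

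\textbf{The ``direct argument'' gives no contradiction.} That $u_i$ labels most bad entries $-b$ is perfectly consistent with $u_i$ being constant on some sub-piece.

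\textbf{The accounting is inconsistent.} Suppose the bad-entry density did shrink by a factor $c_\epsilon<1$ per round. Driving it below $1/|T|$ would take $\Omega_\epsilon(\log|T|)$ rounds, not $O(C_\epsilon d)$. The resulting losses of order $d^{-O(d\log n)}$ would destroy the theorem. A potential that must reach zero cannot give an $n$-independent round count.

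\textbf{The missing idea.} First, pass to the dual formulation (\Cref{prop:dual_asr}). There each row gets a randomized classifier $\eta_i$ that errs with probability at most $\epsilon$ on \emph{every} non-$\star$ entry simultaneously. Second, measure progress by a quantity bounded \emph{above} rather than one that must vanish. The paper uses $\beta(A,d)$ (\Cref{def:beta}): the least probability that $w\sim\eta_i$ correctly separates an oppositely labelled pair in row $i$. It starts at $1-2\epsilon$ by a union bound and never exceeds $1$.

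\textbf{How the paper's round works.} Take $S_1,\dots,S_d$ from the avoidance theorem and split into two cases.
\begin{itemize}
\item If at least half the rows already have some $S_r$ exactly monochromatic on their non-$\star$ entries, pigeonholing over $(r,b)$ finishes. This is a property of $A$ alone, so the constant value of a classifier never has to match the row's labels.
\item Otherwise, for each remaining row some $S_r$ is avoided with probability at least $1/d$ under $\eta_i$. A hyperplane avoiding $S_r$ separates no pair inside $S_r$. So conditioning $\eta_i$ on not avoiding $S_r$ multiplies $\beta$ by $d/(d-1)$ on $A_{R\times S_r}$.
\end{itemize}
Since $\beta\le 1$, the second case can occur only $O(d\log(1/(1-2\epsilon)))$ times. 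This yields exactly the round count and the row/column asymmetry you were aiming for. Termination comes from an $n$-independent combinatorial condition, not from an error density falling below $1/|T|$.
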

Using this theorem, we improve the lower bounds on approximate sign-rank for the concept class of \emph{large-margin halfspaces}, a foundational concept class in learning theory closely related to support vector machines. The formal definition for this concept class over the discrete cube is as follows.
Given a margin parameter $\gamma > 0$, define the class $\cQ_\gamma^d$ 
of partial concepts $c_y \colon \{\pm 1\}^d \to \{\pm 1, \star\}$, indexed by $y \in \{\pm 1\}^d$, by
\[
c_y(x) \;\coloneqq\;
\begin{cases}
\sign\inp{x}{y} & \text{if } |\inp{x}{y}| > \gamma d,\\
\star & \text{otherwise}.
\end{cases}
\]
Informally, the realizable distributions for $\cQ_\gamma^d$ are exactly those supported on points lying at distance at least $\gamma$ from a separating homogeneous hyperplane. 

By applying their Helly-type lower bound, Chornomaz, Moran, and Waknine~\cite[Theorem~6]{10.1145/3717823.3718213} showed that for every $\gamma < 1$ and sufficiently large $d$,
\[
\signrank_\epsilon(\cQ^d_\gamma) \;\ge\; \frac{1}{\epsilon} - 1,
\]
which is constant for any fixed $\epsilon$. We improve this to a bound which is polynomial\footnote{We note that the underlying matrix has $n = 2^d$ rows and columns, and so as a function of the parameter $n$ our lower bound is logarithmic.}  in~$d$, by combining \Cref{thm:main_rect} with a result of Frankl and R\"odl~\cite{MR871675} (see \Cref{thm:FR}) that upper bounds the size of monochromatic rectangles in $\cQ^d_\gamma$.

\begin{theorem}[Lower bound for large-margin half-spaces]
\label{thm:LargeMargin}
For every $\epsilon \in [0,1/2)$ and $\gamma \in (0,1)$,
\[
\signrank_\epsilon(\mathcal{Q}^d_\gamma)
= \Omega_{\epsilon,\gamma}\!\left(\sqrt{\frac{d}{\log d}}\right).
\]
\end{theorem}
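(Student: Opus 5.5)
The plan is to combine \Cref{thm:main_rect} with the Frankl--R\"odl theorem (\Cref{thm:FR}), which bounds the size of monochromatic rectangles in $\cQ^d_\gamma$. The partial matrix $A$ of $\cQ^d_\gamma$ has $m=n=2^d$. Suppose $\signrank_\epsilon(\cQ^d_\gamma)\le k$. Then \Cref{thm:main_rect} yields $R,S\subseteq\{\pm 1\}^d$ with
\[
|R|\ge k^{-C_\epsilon k}2^d,\qquad |S|\ge k^{-C_\epsilon k^2}2^d,
\]
such that every non-$\star$ entry of $A_{R\times S}$ has the same sign. For instance, $\inp{x}{y}\ge -\gamma d$ for all $y\in R$ and $x\in S$, up to the symmetric case.

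The case $\epsilon=0$ is covered as well, since then $C_\epsilon=O(1)$; equivalently, we may simply replace $\epsilon$ by any value in $(0,1/2)$, which only decreases the approximate sign-rank.

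Next I use the Frankl--R\"odl statement in its standard form: for fixed $\gamma\in(0,1)$ there is $\delta=\delta(\gamma)>0$ such that if $R,S\subseteq\{\pm1\}^d$ have no pair with $\inp{x}{y}$ of a prescribed sign and magnitude exceeding $\gamma d$, then $|R|\,|S|\le 2^{(2-\delta)d}$.

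Two details need checking. First, the monochromatic condition forbids pairs with $\inp{x}{y}<-\gamma d$ (or $>\gamma d$). Frankl--R\"odl is usually phrased as a forbidden inner product exactly $-\gamma d$ or similar. I would rely on the paper's version of \Cref{thm:FR}, which is presumably stated directly for rectangles of $\cQ^d_\gamma$ in this form. If needed, one can replace $S$ by $-S$ to switch the sign. Second, $R$ and $S$ must both be nonempty and of the stated sizes, which is immediate.

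Plugging in the sizes gives
\[
2^{2d}\,k^{-C_\epsilon(k+k^2)}\le 2^{(2-\delta)d},
\]
so $2C_\epsilon k^2\log_2 k\ge \delta d$, hence $k^2\log k=\Omega_{\epsilon,\gamma}(d)$. Solving, $k=\Omega\bigl(\sqrt{d/\log d}\bigr)$, since $k\le d+1$ trivially gives $\log k=O(\log d)$.

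The main obstacle is matching the exact form of \Cref{thm:FR} to the rectangle produced. We need it to rule out rectangles of density $2^{-\delta d}$ (exponentially small), not merely constant density. I would also verify that the "$\star$ entries ignored" monochromatic condition is exactly the hypothesis Frankl--R\"odl addresses. The rest is arithmetic.
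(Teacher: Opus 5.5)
Your proposal is correct and follows the paper's proof essentially verbatim: apply \Cref{thm:main_rect} to get a monochromatic rectangle with $|R|\,|S|\ge k^{-C_\epsilon(k+k^2)}4^d$, compare with the $(4-\delta)^d$ bound from \Cref{thm:FR} (your $2^{(2-\delta)d}$ is the same bound with $\delta$ renamed), and solve $k^2\log k=\Omega_{\epsilon,\gamma}(d)$. The matching you flagged as the main obstacle does not arise, because \Cref{thm:FR} is stated in exactly the form you need: a rectangle missing either sign beyond margin $\gamma d$ has $|R|\,|S|\le(4-\delta)^d$. Your reduction of the case $\epsilon=0$ to some $\epsilon'\in(0,1/2)$ by monotonicity is a small point the paper leaves implicit.
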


A few remarks are in order. 
First, the embedding $x \mapsto x/\sqrt{d}$ maps $\{\pm 1\}^d$ into $\mathbb{S}^{d-1}$, so \Cref{thm:LargeMargin} yields corresponding lower bounds for large-margin half-spaces over the unit sphere. Second, Hatami, Hosseini, and Meng~\cite{MR4617400} showed that $\signrank(\cQ^d_\gamma) = \Omega(d)$ for every $\gamma < 1$. This is tight, since the definition of $\cQ^d_\gamma$ directly provides a $d$-dimensional representation. \Cref{thm:LargeMargin} therefore places $\signrank_\epsilon(\cQ^d_\gamma)$ in the range $\tilde\Omega(\sqrt{d})$ to $d$; we leave the problem of closing this gap to future research.

\paragraph{A spectral lower bound.}
We conclude with a corollary to \Cref{thm:main_rect} that provides a lower bound on approximate sign-rank in terms of the operator norm.

\begin{corollary}
\label{cor:spectral_lower}
For every $A \in \{\pm 1\}^{n \times n}$ and $\epsilon \in [0,1/2)$,
\[
\signrank_\epsilon(A) \;\ge\; \Omega_\epsilon\!\left(\sqrt{\frac{\log(n/\!\norm{A})}{\log\log n}}\right).
\]
\end{corollary}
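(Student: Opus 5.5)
The plan is to combine \Cref{thm:main_rect} with the standard spectral bound on monochromatic rectangles. Let $d = \signrank_\epsilon(A)$. For a full sign matrix, \Cref{thm:main_rect} gives $R \subseteq [n]$ and $S \subseteq [n]$ with $|R| \ge d^{-C_\epsilon d} n$ and $|S| \ge d^{-C_\epsilon d^2} n$ such that $A_{R\times S}$ is monochromatic. For any such rectangle, testing $A$ against the indicator vectors gives
\[
|R|\,|S| \;=\; \bigl|\mathbf{1}_R^{\top} A\, \mathbf{1}_S\bigr| \;\le\; \norm{A}\,\sqrt{|R|\,|S|},
\]
and therefore $|R|\,|S| \le \norm{A}^2$.

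Inserting the two size bounds yields
\[
d^{-C_\epsilon (d + d^2)}\, n^2 \;\le\; \norm{A}^2 .
\]
Taking logarithms,
\[
2C_\epsilon\, d^2 \log d \;\ge\; C_\epsilon (d+d^2)\log d \;\ge\; 2\log\bigl(n/\norm{A}\bigr),
\]
so $d^2 \log d \ge \Omega_\epsilon(\log(n/\norm{A}))$.

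It remains to invert this relation, and this is the only real bookkeeping. We may assume $d \le \sqrt{\log n}$, since otherwise the claim holds trivially: $\norm{A} \ge \sqrt n$ for any sign matrix, so $\log(n/\norm{A}) \le \tfrac12\log n$. Under this assumption $\log d \le \log\log n$, and hence
\[
d^2 \;\ge\; \Omega_\epsilon\!\left(\frac{\log(n/\norm{A})}{\log\log n}\right).
\]
Taking square roots gives the stated bound.

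Two edge cases need checking. The first is $\epsilon = 0$, where $C_\epsilon$ is still finite, so nothing changes; one could also invoke the exact sign-rank rectangle results, but that is not needed. The second is small $d$ (for example $d=1$), where $\log d = 0$. In that case I would replace $\log d$ by $\log(d+1)$, which is harmless since the rectangle bounds only weaken, or I would absorb this case into the constant.

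I expect no step to be a real obstacle. The whole argument is the one-line spectral inequality for a monochromatic rectangle followed by logarithm manipulation.
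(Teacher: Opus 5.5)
Your proposal is correct and is essentially the paper's proof: apply \Cref{thm:main_rect}, combine it with the spectral bound $|R|\,|S| \le \norm{A}^2$ for monochromatic rectangles, and take logarithms. Your extra bookkeeping (using $\norm{A} \ge \sqrt{n}$ to reduce to $d \le \sqrt{\log n}$ so that $\log d \le \log\log n$, and reading the bounds with $\log(d+1)$ for small $d$) spells out what the paper calls ``rearranging''. For $\epsilon = 0$, note that \Cref{thm:main_rect} is stated only for $\epsilon > 0$, so rather than appealing to $C_0$ being finite, use monotonicity: $\signrank_0(A) \ge \signrank_{1/4}(A)$.
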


\begin{proof}
We combine \Cref{thm:main_rect} with a spectral upper bound on the size of monochromatic rectangles. If $S \times T$ is a monochromatic rectangle in $A$, then
\[
|S|^2\,|T|^2 \;=\; \Bigl|\sum_{i \in S,\, j \in T} A_{i,j}\Bigr|^2 \;\le\; \bigl(\norm{A}\,\norm{\mathbf{1}_S}\,\norm{\mathbf{1}_T}\bigr)^2 \;=\; \norm{A}^2\,|S|\,|T|,
\]
which gives $|S|\,|T| \le \norm{A}^2$.  

On the other hand, if $\signrank_\epsilon(A) = d$, then \Cref{thm:main_rect} produces a monochromatic rectangle of size at least $d^{-O_\epsilon(d)}\,n \times d^{-O_\epsilon(d^2)}\,n$, so
\[
d^{-O_\epsilon(d^2)}\,n^2 \;\le\; \norm{A}^2.
\]
Taking logarithms, $\log(n/\!\norm{A}) \le O_\epsilon(d^2 \log d)$, and rearranging gives the claimed bound.
\end{proof}

\subsubsection{The Hadamard matrix}
The Hadamard matrix is the classical example of a matrix with large sign-rank. 
Letting $n = 2^m$, the $n \times n$ Hadamard matrix $H_{m}$ has rows and columns indexed by $\{0,1\}^m$, with entries
\[
H_m(x,y) \;\coloneqq\; (-1)^{\sum_{i=1}^m x_i y_i}.
\]
Forster~\cite{MR1964645} proved $\signrank(H_m) \ge \sqrt{n} = \sqrt{2^m}$, the first polynomial lower bound on the sign-rank of an explicit matrix. This makes the Hadamard matrix a natural candidate for an explicit example with large approximate sign-rank.

However, we show that the approximate sign-rank of $H_m$ is at most sub-polynomial in $n = 2^m$, ruling it out as a source of polynomial lower bounds.

\begin{theorem}
\label{thm:hadamard}
Let $H_m$ be the $n \times n$ Hadamard matrix with $n = 2^m$. For every $\epsilon \in (0,1/2)$,
\[
\Omega_\epsilon\!\left(m\right) \;\le\;\signrank_\epsilon(H_m) \;\le\; m^{O(\sqrt{m}\log(1/\epsilon))}.
\]
\end{theorem}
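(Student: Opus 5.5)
The two bounds need separate arguments, and I treat them in turn.

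\emph{Lower bound.} The introduction says this comes from the VC-dimension bound: approximate sign-rank is at least linear in VC dimension. So the plan is first to establish (or invoke, as the paper announces) an inequality of the form $\signrank_\epsilon(A) \ge c_\epsilon \cdot \VCdim(A)$. The natural route is a sample-compression or dimension argument. Suppose $\{v_j\}\subseteq\RR^d$ realizes $A$ approximately. Fix a shattered set $S$ of size $k$ and take $\mu$ uniform on $S$. Then every labeling of $S$ that appears as a row is $\epsilon$-approximated by a halfspace in $\RR^d$ restricted to $S$. Halfspaces on $k$ points in $\RR^d$ give at most $O(k)^{d}$ patterns. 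Each such pattern is $\epsilon k$-close in Hamming distance to at most $\binom{k}{\le \epsilon k}\le 2^{H(\epsilon)k}$ labelings. Since all $2^k$ labelings must be covered, we get $2^k \le O(k)^d 2^{H(\epsilon)k}$, hence $d \ge \Omega((1-H(\epsilon))k/\log k)$. That loses a $\log k$. To remove it I would use a fixed threshold on Hamming balls: cover labelings with the halfspace patterns restricted to a random subset, or use the VC dimension of halfspaces, which is $d$, with a Haussler packing argument. A set of $2^{\Omega(k)}$ labelings with pairwise distance greater than $2\epsilon k$ (Gilbert--Varshamov, valid when $\epsilon<1/4$; for $\epsilon$ near $1/2$ I would use $\mu$ concentrated on a smaller shattered subset and a Plotkin-type code) needs distinct approximating patterns. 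Haussler's packing lemma bounds these by $(C/\epsilon)^{d}$, which gives $d \ge \Omega_\epsilon(k)$. Finally, $\VCdim(H_m)\ge m$, since the rows indexed by $e_1,\dots,e_m$ are shattered by the columns. Hence $\signrank_\epsilon(H_m)=\Omega_\epsilon(m)$.

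\emph{Upper bound.} I follow Alman--Williams. The entry $H_m(x,y)=(-1)^{\inp{x}{y}}$ depends only on the parity of $|x\wedge y|$. For a fixed row $x$ and distribution $\mu$, it suffices to find a polynomial $p_x$ in $y$ of low degree, from a fixed monomial family independent of $x$, with $\sign p_x(y) = (-1)^{\inp{x}{y}}$ on all but $\epsilon$ of the $\mu$-mass. The columns are then embedded as $v_y=(\text{monomials of degree}\le D \text{ in } y)$, and $u_x$ is the coefficient vector, so the dimension is $\binom{m}{\le D}=m^{O(D)}$. By Yao's minimax, it suffices to give a \emph{probabilistic} polynomial: a distribution over degree-$D$ real polynomials $P$ such that for every $y$, $\Pr[\sign P(y)\ne(-1)^{\inp{x}{y}}]\le\epsilon$. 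Averaging over $\mu$ then yields a single good polynomial for that $\mu$. Alman--Williams construct a probabilistic polynomial for $\mathrm{MOD}_2$ on $m$ variables of degree $O(\sqrt{m\log(1/\epsilon)})$, or $O(\sqrt m\log(1/\epsilon))$ after amplification, in the threshold/sign sense. Their method randomly restricts to a window where the Hamming weight $w=|x\wedge y|$ is pinned to an interval of length $O(\sqrt{m})$ around a random shift. Inside that window a polynomial of degree $O(\sqrt m)$ interpolates the parity. Error amplification by a majority of $O(\log 1/\epsilon)$ independent copies keeps the degree at $O(\sqrt m\log(1/\epsilon))$, and the sign of a product or sum handles the combination. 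Substituting $z_i = x_iy_i$ makes this a polynomial in $y$ whose monomials are multilinear monomials of degree at most $D$, with the coefficients depending on $x$. This gives dimension $m^{O(\sqrt m\log(1/\epsilon))}$.

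\emph{Main obstacle.} The main obstacle is the probabilistic sign-representation of parity with degree $O(\sqrt m)$ per constant error. Weights far from the mean $m/2$ of the restricted sum must also be handled. For general $\mu$, the weight of $x\wedge y$ is not concentrated, so the random-window trick must work pointwise for each $y$. I would adapt Alman--Williams's construction, which already gives pointwise error guarantees for $\mathrm{MOD}$ functions, and I would verify that majority amplification can be implemented as a sign of a polynomial without degree blowup beyond a $\log(1/\epsilon)$ factor.
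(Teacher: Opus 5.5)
There is a genuine gap in your lower bound, although the repair is already in your first attempt. That first counting argument is exactly the paper's proof of \Cref{thm:VC_less_asgnrank}: uniform $\mu$ on a shattered set of $k$ columns, Sauer--Shelah to count the halfspace patterns, and Hamming balls of radius $\epsilon k$ around each pattern. The $\log k$ loss you observed comes only from bounding the number of patterns by $O(k)^d$. Use $\binom{k}{\le d}\le (ek/d)^d$ instead (or $2^{k\,h(d/k)}$ for $d\le k/2$, as the paper does). Then $2^k\le\binom{k}{\le d}\binom{k}{\le \epsilon k}$ becomes $1-h(\epsilon)\le (d/k)\log_2(ek/d)$. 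The right-hand side tends to $0$ as $d/k\to 0$, so this forces $d\ge c_\epsilon k$ for every $\epsilon<1/2$.

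The packing detour you propose instead does not prove the statement as claimed. Codewords that are merely more than $2\epsilon k$ apart only force their approximating patterns to be distinct, so Haussler's lemma gives nothing beyond Sauer--Shelah. To get a bound of the form $(C/\delta)^d$ you need code distance exceeding $(2\epsilon+\delta)k$. More seriously, for $\epsilon\ge 1/4$ the Plotkin bound rules out any code of exponential size with relative distance at least $2\epsilon\ge 1/2$. Moving $\mu$ to a smaller shattered set leaves the relative error $\epsilon$ unchanged. So the packing route cannot reach $\epsilon\in[1/4,1/2)$, and only the covering count works on the whole range $(0,1/2)$. Also, the definition is asymmetric, so the shattered set must consist of columns; for $H_m$ this is harmless because $H_m$ is symmetric.

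Your upper bound is the paper's argument. Column $y$ is embedded as the vector of monomials $\prod_{i\in S}y_i$ with $|S|\le D$, and row $x$ uses the coefficient vector of $p(x_1y_1,\ldots,x_my_m)$. The crucial point is that $v_y$ does not depend on $p$. Averaging over $p$ for a fixed $\mu$ is the easy direction of minimax and is equivalent to the paper's use of \Cref{prop:dual_asr}.

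However, the ``main obstacle'' you flag is not one, and the mechanism you suggest for it would fail. The sign of a sum of real polynomials is not the majority of their signs, because one erring copy can have arbitrarily large magnitude. The sign of a product is the parity of their signs. None of this is needed. The Alman--Williams theorem already gives, for every $\epsilon$, a distribution over integer polynomials of degree $O(\sqrt{m}\log(1/\epsilon))$ that \emph{equals} the symmetric parity function with probability at least $1-\epsilon$ at every input. Exact agreement yields sign agreement. Since the guarantee is pointwise in $(x_1y_1,\ldots,x_my_m)$, how $\sum_i x_iy_i$ is distributed under $\mu$ is irrelevant. Cite it as a black box, as the paper does.
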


On the other hand, as shown in \Cref{rem:random_matrices}, a counting argument implies that a random $n \times n$ sign matrix has approximate sign-rank $\Omega_\epsilon(n/\log^2 n)$ with high probability. Thus, matrices with polynomially large approximate sign-rank do exist in principle.

\subsubsection{Approximate Sign-Rank versus VC Dimension}\label{sec:intro_VC}
We next relate approximate sign-rank to the classical notion of \emph{VC dimension} which, by the fundamental theorem of PAC learning, characterizes the sample complexity of PAC-learning~\cite[Theorem 6.7]{shalev2014understanding}. 
Recall that a concept class $\cC \subseteq \{\pm 1\}^{\cX}$ \emph{shatters} a set $S \subseteq \cX$ if
\[
\{ c|_S \mid c \in \cC \} = \{\pm 1\}^S,
\]
where $c|_S$ denotes the restriction of $c$ to $S$. The \emph{VC dimension} of $\cC$ is
\[
\VCdim(\cC) \coloneqq \sup \{ |S| \mid S \subseteq \cX \text{ is shattered by } \cC \}.
\]
For a sign matrix $A$, we define $\VCdim(A)$ as the VC dimension of the associated concept class whose concepts are the rows of $A$ and whose domain elements are the columns.

It is well known~\cite[Theorem 9.2]{shalev2014understanding} that the VC dimension of homogeneous half-spaces in $\mathbb{R}^d$ is $d$  and, as a consequence,
\begin{equation}
\label{eq:signrank_VC}
\VCdim(A) \le \signrank(A).
\end{equation}
For approximate sign-rank, however, the relationship is less direct. In particular, it is not hard to construct (cf.~\Cref{prop:trivial_upper}) sign matrices $A$ for which $\signrank_\epsilon(A) < \VCdim(A)$ for some $\epsilon < 1/2$. This raises the question of whether $\signrank_\epsilon(A)$ can still be lower bounded in terms of $\VCdim(A)$.
Our next result shows that this is indeed the case.
\begin{theorem}
\label{thm:VC_less_asgnrank}
For every $\epsilon \in [0,1/2)$ and every sign matrix $A$,
\[
\signrank_\epsilon(A) \;\ge\; \Omega_\epsilon(\VCdim(A)).
\]
\end{theorem}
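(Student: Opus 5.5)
We plan to reduce to a shattered set and argue that low approximate sign-rank forces a good \emph{uniform-error} approximation of every labeling of that set. This is impossible by a counting bound on sign patterns of hyperplane arrangements.

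Let $S=\{x_1,\dots,x_k\}$ be a set of columns shattered by $A$, with $k=\VCdim(A)$, and suppose $\signrank_\epsilon(A)\le d$ via vectors $v_1,\dots,v_n\in\RR^d$. For each labeling $b\in\{\pm1\}^S$, some row $i$ of $A$ agrees with $b$ on $S$. Taking $\mu$ to be uniform on $S$, \Cref{def:approx_sign_rank} gives some $u\in\RR^d$ whose sign pattern $(\sign\inp{u}{v_{x}})_{x\in S}$ lies within Hamming distance $\epsilon k$ of $b$. So every $b\in\{\pm1\}^k$ lies within Hamming distance $\epsilon k$ of the set
\[
P=\{(\sign\inp{u}{v_x})_{x\in S}: u\in\RR^d\}\subseteq\{\pm1,0\}^k .
\]
Entries equal to $0$ only increase the distance to $b$, so we may perturb $u$ to make all signs nonzero without losing the property.

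The key geometric input is the classical bound on the number of sign patterns realized by $k$ homogeneous hyperplanes in $\RR^d$ (Cover's / Warren's bound):
\[
|P|\le 2\sum_{i<d}\binom{k-1}{i}\le (ek/d)^{d}\cdot 2 .
\]
The covering condition then gives
\[
2^k\le |P|\cdot \sum_{j\le \epsilon k}\binom{k}{j}\le |P|\,2^{H(\epsilon)k}.
\]
Hence $d\log(ek/d)+1\ge (1-H(\epsilon))k$. Since $\epsilon<1/2$ we have $1-H(\epsilon)>0$. Writing $t=k/d$, this reads $\log(et)\ge (1-H(\epsilon))t - O(1/d)$, which forces $t\le C_\epsilon$. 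That is, $d\ge k/C_\epsilon=\Omega_\epsilon(\VCdim(A))$. A small-$d$ edge case, where the $+1$ matters, is handled by the trivial bound $d\ge1$, adjusting constants.

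The only step needing care is the counting of sign patterns, and even this is standard. I would cite the Cover/Warren bound on the number of cells of a central arrangement, rather than reprove it. I would also double-check that the single-distribution choice $\mu$ uniform on $S$ is legitimate. It is, since the definition quantifies over every $\mu$ for the fixed embedding $\{v_j\}$. The rest is the entropy estimate of Hamming balls, where one must keep $\epsilon$ strictly below $1/2$ to get a positive gap $1-H(\epsilon)$.
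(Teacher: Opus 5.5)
Your proposal is correct and follows essentially the same route as the paper: restrict to a shattered set of size $k$, apply the definition with $\mu$ uniform on it, and compare $2^k$ against the number of homogeneous-halfspace labelings of those $k$ points times the volume of a Hamming ball of radius $\epsilon k$. The differences are cosmetic. You count labelings with Cover's region bound where the paper uses Sauer--Shelah to get $\binom{k}{\le d}$, and you finish with the $(ek/d)^d$ estimate instead of the entropy inequality $h(d/k)+h(\epsilon)\ge 1$.
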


While \Cref{thm:VC_less_asgnrank} shows that VC dimension is a lower bound on approximate sign-rank, the two parameters can differ significantly. For exact sign-rank, every sign matrix with VC dimension $1$ has sign-rank at most $3$~\cite[Theorem~2]{DBLP:conf/colt/AlonMY16}, yet Alon, Moran, and Yehudayoff~\cite{DBLP:conf/colt/AlonMY16} exhibited $n \times n$ sign matrices with VC dimension $2$ and sign-rank $\Omega(\sqrt{n}/\log n)$. We prove that a similar separation holds for approximate sign-rank.

\begin{theorem}
\label{thm:alpha_signrank_separation}
For every fixed $\epsilon \in (0,1/2)$, there exist $n \times n$ sign matrices $A$ with
\[
\VCdim(A) \;\le\; 2 \qquad\text{and}\qquad \signrank_\epsilon(A) \;\ge\; \Omega_\epsilon\!\left(\frac{\sqrt{n}}{\log^2 n}\right).
\]
\end{theorem}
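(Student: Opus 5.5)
The plan is a counting argument over matrices of VC dimension at most $2$, in the spirit of Alon, Moran, and Yehudayoff, which also appears to drive the random-matrix bound in \Cref{rem:random_matrices}. Two ingredients are needed.

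\textbf{Many matrices of VC dimension $\le 2$.} The standard family is the point--line incidence matrices of projective planes. Take a projective plane of order $q$ with $N=q^2+q+1$ points and lines. Two distinct lines meet in exactly one point, so the ``line contains point'' concept class cannot shatter any $3$ points. We need a large family of such matrices, not a single one. Following Alon--Moran--Yehudayoff, I would take all $n\times n$ sign matrices in which every two rows agree on ``$+1$'' in at most one column, i.e.\ incidence matrices of \emph{linear hypergraphs} (partial planes). Every such matrix has VC dimension $\le 2$: shattering $\{a,b,c\}$ would require a row that is $+1$ on $a,b$ and another that is $+1$ on $a,b,c$, and these two rows share two $+1$ columns. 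Subgraphs of a fixed projective plane's incidence bipartite graph remain linear. Taking all row/column-restricted subsets of the incidence matrix of a plane with $n\approx q^2$ gives $2^{\Theta(n^{3/2})}$ distinct matrices, since there are about $n^{3/2}$ incidences and any subset of them can be deleted.

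\textbf{Few matrices of small approximate sign-rank.} I would upper bound the number of $n\times n$ sign matrices with $\signrank_\epsilon\le d$ by $2^{O_\epsilon(nd\log^2 n)}$. Fix the column vectors $v_1,\dots,v_n\in\RR^d$. By Warren's theorem, the column sign patterns realizable by homogeneous halfspaces number at most $n^{O(d)}$, and the configuration $\{v_j\}$ itself can be taken up to combinatorial type. Now fix a row $i$. By the minimax theorem, $\signrank_\epsilon\le d$ gives a distribution $P_i$ over halfspace patterns such that every column $j$ is correct with probability $\ge 1-\epsilon$. Sampling $k=O_\epsilon(\log n)$ patterns from $P_i$ and taking their majority recovers row $i$ exactly with positive probability, by Chernoff plus a union bound over columns. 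So each row is a majority of $k$ patterns from a set of size $n^{O(d)}$, leaving $n^{O(dk)}=2^{O_\epsilon(d\log^2 n)}$ choices per row. The remaining issue is the count of combinatorial configurations of $v$. Rather than count these, I would observe that the set of halfspace patterns depends only on the order type of $v$, and by Goodman--Pollack / Warren the number of order types of $n$ points in $\RR^d$ is $2^{O(d^2 n\log n)}$. This is fine when $d\le n$. Altogether the count is $2^{O(d^2 n\log n)+O_\epsilon(nd\log^2 n)}$.

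\textbf{Comparison and the main obstacle.} We need the count of small-rank matrices to fall below $2^{\Theta(n^{3/2})}$. The term $nd\log^2 n$ forces $d\le c\sqrt n/\log^2 n$, which is exactly the target. However, the order-type term $d^2 n\log n$ restricts us to $d\lesssim n^{1/4}$, which is too weak. This is the main obstacle. To remove it, I would avoid counting order types and instead count the column sign vectors directly. The matrix whose rows are all realizable halfspace patterns restricted to the $n$ columns is determined by $O(dk)$-wise data. Alternatively, I would apply Warren's theorem jointly to the polynomials $\inp{u_\ell}{v_j}$ in all variables: with $nk$ sampled vectors $u$ and $n$ vectors $v$, we have $n^2k$ bilinear polynomials in $(nk+n)d$ variables. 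This gives $(n^2k/(nkd))^{O(nkd)}=2^{O_\epsilon(nd\log^2 n)}$ sign patterns of the $n\times nk$ matrix, and the original matrix is obtained from it by row-wise majority. That yields the bound $2^{O_\epsilon(nd\log^2 n)}$. Comparing with $2^{\Theta(n^{3/2})}$ then shows that some VC-dimension-$2$ matrix has $\signrank_\epsilon\ge\Omega_\epsilon(\sqrt n/\log^2 n)$.
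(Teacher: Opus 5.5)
Your proposal is correct and follows essentially the same route as the paper: the projective-plane incidence family with arbitrary $+1$ entries deleted gives $2^{\Omega(n^{3/2})}$ matrices of VC dimension at most $2$, and an Adleman-style majority over $k=O_\epsilon(\log n)$ samples from the minimax distributions reduces the count to sign patterns of an $nk\times n$ matrix of sign-rank at most $d$. Your joint application of Warren's theorem to the bilinear forms $\inp{u_{i,\ell}}{v_j}$ is exactly how the Alon--Moran--Yehudayoff counting lemma is proved (the paper cites that lemma as a black box for $nk\times nk$ matrices), so you can drop the order-type detour entirely.
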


\subsection{Technical Overview}

\subsubsection{Monochromatic Rectangles via Hyperplane Avoidance}\label{sec:intro_proof_main}

To prove \Cref{thm:main_rect}, we build on the approach of Alon et al.~\cite{ALON2005310}, who established the analogous result for exact sign-rank. We begin by recalling their argument, and then explain the new ideas needed to handle approximate sign-rank.

We say that a hyperplane \emph{avoids} a set $U \subseteq \RR^d$ if $U$ is 
contained in one of the two open half-spaces bounded by the hyperplane,  or equivalently, if the hyperplane does not intersect the convex hull $\conv(U)$.    

\paragraph{The exact sign-rank~\cite{ALON2005310}.}  The key tool in the argument of Alon et al.~\cite{ALON2005310} is the following classical partition theorem of Yao and Yao.

\begin{theorem}[Yao--Yao~\cite{YaoYao}]
If $\mu$ is a continuous and everywhere-positive density on $\RR^d$, then $\RR^d$ can be partitioned into $2^d$ regions, each of mass $1/2^d$, such that every hyperplane avoids at least one region.
\end{theorem}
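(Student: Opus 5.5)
The plan is induction on $d$, following the original strategy of Yao and Yao. The case $d=1$ is immediate: split $\RR$ at the median of $\mu$, and any point (a hyperplane in $\RR^1$) misses one of the two open half-lines.

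\textbf{Strengthened hypothesis.} The bare statement does not induct, so I would prove a stronger one. For every continuous, everywhere-positive density $\mu$ on $\RR^d$ there are a \emph{center} $c\in\RR^d$ and a partition of $\RR^d$ into $2^d$ regions with the following properties:
\begin{itemize}
\item each region has mass $2^{-d}$;
\item each region is a union of rays emanating from $c$, or at least is ``star-shaped enough'' about $c$;
\item every affine hyperplane avoids the interior of at least one region, and this holds also for hyperplanes through $c$.
\end{itemize}
The partition and $c$ should depend continuously on $\mu$ in a suitable sense. Because $\mu$ has positive density, boundaries are null sets, so ``interior'' and ``region'' can be interchanged freely.

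\textbf{Recursive construction.} Fix a unit direction $u\in\mathbb{S}^{d-1}$.
\begin{enumerate}
\item Let $H_u$ be the hyperplane orthogonal to $u$ that bisects $\mu$. Its two sides $H_u^{\pm}$ each have mass $1/2$.
\item Project $\mu|_{H_u^{+}}$ and $\mu|_{H_u^{-}}$ orthogonally onto $H_u\cong\RR^{d-1}$. This gives two positive continuous densities, $\nu_u^{+}$ and $\nu_u^{-}$.
\item Apply the inductive hypothesis to each. This yields centers $c_u^{+},c_u^{-}\in H_u$ and partitions of $H_u$ into $2^{d-1}$ cells.
\item Pull each cell back to $H_u^{\pm}$ through the projection. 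This gives $2^{d}$ regions of mass $2^{-d}$ each.
\end{enumerate}
The key step is to choose $u$ so that $c_u^{+}=c_u^{-}$; this common point is the new center $c$. The map $u\mapsto c_u^{+}-c_u^{-}$ is a continuous tangent vector field on $\mathbb{S}^{d-1}$, and it is odd, since replacing $u$ by $-u$ swaps the two sides. A Borsuk--Ulam type argument should then give a zero; this is where continuity of the inductive construction is needed. An alternative I would also try is to fix $u$ and vary a small tilt of the cutting hyperplane, applying a degree argument to the difference of centers.

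\textbf{Verification and main obstacle.} Let $h$ be any hyperplane.
\begin{itemize}
\item If $h$ is parallel to $H_u$, it misses one half $H_u^{\pm}$ entirely, and hence misses every region in that half.
\item Otherwise $g=h\cap H_u$ is a $(d-2)$-flat in $H_u$. On one side of $H_u$, say $H_u^{+}$, the hyperplane $h$ tilts away from $u$ relative to $g$. I would argue that the orthogonal projection of $h\cap H_u^{+}$ onto $H_u$ lies in a half-space of $H_u$ bounded by $g$, and that this half-space contains $c$ in its closure.
\item Apply the inductive hypothesis in $H_u$ to a $(d-2)$-flat through $c$ parallel to $g$, extended so that it separates. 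This gives a cell of the $\nu_u^{+}$-partition lying on the far side of that flat. Its preimage should then miss $h$, using that cells are star-shaped about $c$ and that $c\in H_u$ lies on $h$'s ``near'' side.
\end{itemize}
I expect this geometric case analysis to be the main obstacle: the tilt direction, the position of $c$ relative to $g$, and the choice of side must all be matched. This is precisely why the strengthened hypothesis must say something about hyperplanes through the center. The other delicate point is making the inductive construction continuous in $u$, so that the topological fixed-point step is legitimate.
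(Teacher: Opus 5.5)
The paper does not prove this theorem. It is quoted from Yao--Yao as motivation, so your outline has to stand on its own. The recursive architecture is the right one: bisect, project each half, recurse, glue along a common center, with every region a cone with apex $c$. But the decisive step, producing $u$ with $c_u^+=c_u^-$, does not go through as written.

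\textbf{The topological step fails.} The field $F(u)=c_u^+-c_u^-$ is a section of the tangent bundle of $\mathbb{S}^{d-1}$, not an odd map into a fixed $\RR^{d-1}$, so Borsuk--Ulam does not apply. Being odd and tangent does not force a zero: for even $d$, identify $\RR^d$ with $\mathbb{C}^{d/2}$; then $u\mapsto iu$ is odd, tangent and of unit length. For odd $d$ the hairy ball theorem gives a zero without using oddness. For $d=2$ your field does vanish, but because rotating a pair of perpendicular lines by $90^\circ$ swaps the quadrant masses, not because of oddness. So for even $d\ge 4$ nothing in the proposal yields the common center.

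\textbf{$F$ is not yet a well-defined continuous map.} That requires the $(d-1)$-dimensional center to be a single-valued continuous function of the measure, and your construction gives no uniqueness. Already in the plane, your partitions are exactly the quarterings of $\mu$ by two perpendicular lines. A smooth positive density can admit several such pairs with different crossing points; small non-symmetric perturbations of the Gaussian do this. To carry continuity through the induction you would have to select a zero of $F$ continuously in $\mu$. That fails in general, since zeros of a varying vector field are created and annihilated in pairs.

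\textbf{Repair 1: a one-sided invariant.} Use: each region is a cone with apex $c$, and every closed half-space containing $c$ contains a region. Your version (some region avoids each hyperplane through $c$) does not let you choose the side, which is exactly what your case analysis needs. With the one-sided version the verification is two lines. Write $y-c=tu+z$ with $z\perp u$, and $a=\alpha u+a'$ with $a'\perp u$. If $\alpha\ge 0$, then on $H_u^-$ we have $\inp{a}{y-c}\le\inp{a'}{z}$. Hence a cell of $\nu_u^-$ contained in $\{\inp{a'}{z}\le b\}$, which is a closed half-space of $H_u$ containing $c$ when $b\ge0$, lifts to a region contained in $\{\inp{a}{y-c}\le b\}$. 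If $\alpha\le 0$, use $H_u^+$.

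\textbf{Repair 2: set the topological step up between fixed spaces.} Keep the cutting hyperplane fixed, say $\{x_d=m\}$, and vary the projection direction $e_d+w$ with $w\in\RR^{d-1}$. The lifting argument above works for any projection transversal to the cut. The invariant implies every center has Tukey depth at least $2^{-(d-1)}$ in its measure. Shearing then pushes $c^+(w)$ and $c^-(w)$ apart along $w$ at a linear rate. So $w\mapsto c^+(w)-c^-(w)$ has degree $\pm1$ on a large sphere, and hence a zero.

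This works provided you also prove that the lower-dimensional center is unique (hence continuous), or replace continuity by a suitable set-valued argument. That uniqueness/continuity statement is the substantive piece missing from your plan.
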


Let $A \in \{\pm 1\}^{m \times n}$ with $\signrank(A) = d$, and let $\{u_i\}_{i \in [m]}, \{v_j\}_{j \in [n]} \subseteq \RR^d$ satisfy $A_{i,j} = \sign\inp{u_i}{v_j}$. Applying the Yao--Yao theorem to a smooth approximation of the uniform distribution over $\{v_j\}$ yields a partition $S_1,\ldots,S_{2^d}$ of $\{v_j\}_{j \in [n]}$ into almost-equal-sized parts such that every hyperplane avoids at least one part.

For any row $i \in [m]$, the hyperplane defined by $u_i$ avoids some $S_r$, so all entries $A_{i,j}$ with $j \in S_r$ share the same sign. By the pigeonhole principle over the choice of $r \in [2^d]$ and the sign $b \in \{\pm 1\}$, there exist a set of rows $R \subseteq [m]$ with $|R| \ge m/2^{d+1}$ and a part $S_r$ such that $\sign \inp{u_i}{v_j}=b$ for all $i \in R$ and $j \in S_r$. Therefore, $A_{R \times S_r}$ is a monochromatic submatrix with the desired density.

 \paragraph{The difficulty with approximate sign-rank.}
When $\signrank_\epsilon(A) = d$, the deterministic vectors $u_i$ are replaced by distributions over hyperplanes: By a standard minimax argument (see \Cref{prop:dual_asr}), one can show that there exist vectors $\{v_j\}_{j \in [n]} \subseteq \RR^d$ and, for each row $i$, a distribution $\eta_i$ over $\RR^d$ such that 
\[\Pr_{u \sim \eta_i}[\sign\inp{u}{v_j} \neq A_{i,j}] \le \epsilon \text{ for all } j.\]  
We could apply the Yao--Yao theorem again, and partition $V$ into $S_1, S_2, \ldots, S_{2^d}$.
But now we have a problem: $\eta_i$ is not guaranteed to correctly classify every entry in the $i$th row of $A$, and we can only conclude the weaker property that there is an index $r$ such that the hyperplane distributed according to $\eta_i$ avoids $S_r$ with probability $\geq 1/2^d$.

\paragraph{The density increment argument.}
To get around this issue, we break the argument into two cases. The simplest case is when, for at least half of the rows $i \in [m]$, some part $S_r$ makes the sub-row $A_{\{i\} \times S_r}$ monochromatic. Then the pigeonhole argument from the exact case applies directly, producing a large monochromatic rectangle.

The interesting case is when this fails: for at least half of the rows, every sub-row $A_{\{i\} \times S_r}$ contains both signs. 
Now we use the distributions $\eta_i$ to perform a density increment argument.
The key idea is to introduce a parameter $\beta(A,d)$, defined as the largest $\beta > 0$ such that for every row in the matrix $A$, there is a randomized hyperplane that separates the $+1$ and $-1$ entries with probability at least $\beta$. 
Initially, a union bound shows that if $\signrank_\epsilon(A)=d$, then $\beta(A,d) \ge 1-2\epsilon$.

As we remarked above, with probability $\geq 1/2^d$, the randomized hyperplane distributed by $\eta_i$ avoids some $S_r$, and therefore, does not separate the $+$ and $-$ in $S_r$. 
Hence, by restricting the columns to $S_r$, and conditioning on the event that $\eta_i$ does not avoid $S_r$, we can boost $\beta$ by a factor of $2^d/(2^d - 1)$, at the cost of passing to a submatrix. Since repeating this process $O_\epsilon(2^d)$ times would push $\beta$ above $1$, which is an impossibility,   at some point before that we must arrive at the simple case, producing a large monochromatic rectangle.

\subsubsection{A New Hyperplane Avoidance Theorem}\label{sec:intro_geo}

In the argument sketched above, the Yao--Yao theorem imposes a high price: in each iteration, we restrict to a $ 1/2^d$ fraction of all rows and columns, and then we must repeat this up to $2^d$ times. 
Overall, this yields a \emph{doubly}-exponential loss in density of the final constructed rectangle.
To improve this, we prove the following new hyperplane avoidance theorem, which produces $d$ sets rather than $2^d$ sets.
The proof of \cref{thm:main_rect} then follows the outline from the previous section, with \Cref{thm:main_geo} replacing the Yao--Yao theorem, as now the density-increment process only needs to be repeated $O_\epsilon(d)$ times instead of $O_\epsilon(2^d)$ times.

\begin{theorem}[Hyperplane avoidance]\label{thm:main_geo} 
If $P \subset \mathbb{R}^d$ is a set of points in general position (i.e.~no $d+1$ points lie in a common hyperplane), then there exist subsets
$S_1, \ldots, S_d \subseteq P$ such that
\begin{itemize}
\item $|S_i| \ge d^{-O(d)}|P|$ for all $i=1,\ldots,d$, and
\item every homogeneous hyperplane avoids at least one of      $S_1,\ldots,S_d$.
\end{itemize}
\end{theorem}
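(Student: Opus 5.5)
We will reduce \Cref{thm:main_geo} to a statement about points on the sphere that are in isotropic position, and then find $d$ heavy, thin clusters whose directions are well conditioned.

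\emph{Normalization.} For a homogeneous hyperplane, avoidance of a set $S$ is invariant under two operations: applying an invertible linear map $T$ to the points (with the normal transformed by $T^{-\top}$), and rescaling each point by a positive scalar. By the Forster--Barthe theorem, general position lets us choose $T$ so that the points $x_p \coloneqq Tp/\norm{Tp}$ satisfy
\[
\sum_{p\in P} x_p x_p^{\top} = \frac{|P|}{d}\, I .
\]
So it suffices to prove the theorem for a multiset $X$ of $n=|P|$ unit vectors in isotropic position. Equivalently, $\Ex_{x \in X}\inp{x}{u}^2 = 1/d$ for every unit vector $u$.

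\emph{Key avoidance criterion.} Suppose $w_1,\dots,w_d$ are unit vectors with $\sigma_{\min}[w_1 \cdots w_d] \ge c_0$. Then every unit $u$ satisfies $\max_i |\inp{u}{w_i}| \ge c_0/\sqrt d$. Let $S_i$ consist of points within distance $\delta \coloneqq c_0/(2\sqrt d)$ of $w_i$. For the index $i$ attaining the maximum, every $x\in S_i$ has $\inp{u}{x}$ of the same sign as $\inp{u}{w_i}$, so the hyperplane $u^\perp$ avoids $S_i$. The task is therefore to find $d$ directions $w_i$ with $\sigma_{\min} \ge c_0 = d^{-O(1)}$ such that each $\delta$-cap around $w_i$ contains $d^{-O(d)} n$ points. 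Since $\delta = d^{-O(1)}$, the sphere is covered by $N = d^{O(d)}$ caps of radius $\delta$. Call a cap \emph{heavy} if it contains at least $n/(dN)$ points. The light caps contain at most $n/d$ points in total. Removing them, and replacing each remaining point by its cap center (error $\delta$ per point), perturbs the frame operator by a small multiple of $n/d$ in operator norm. Hence the weighted centers $\{\sqrt{m_c}\, c\}$, with $m_c$ the number of points in cap $c$, still have frame operator $\succeq (n/2d)I$, provided the constants are tuned so the perturbation stays below $n/(2d)$.

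\emph{Choosing well-conditioned centers (main obstacle).} We now need $d$ heavy centers forming a matrix with smallest singular value $d^{-O(1)}$. A greedy Gram--Schmidt choice, which always exists by isotropy, only guarantees a lower bound on the determinant, and this could force $c_0 = d^{-O(d)}$ and hence $d^{O(d^2)}$ caps. To avoid this loss we use the Bourgain--Tzafriri restricted invertibility principle, in its weighted form, applied to the nearly isotropic family of heavy centers. It yields $k \ge c d$ centers whose span $E_1$ has the chosen vectors well conditioned, with constant lower singular value bound, once normalized. The remaining issue is that we need all $d$ directions rather than $cd$.

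We plan to iterate. Project the heavy points onto $E_1^\perp$ and re-apply Forster--Barthe there, after discarding the points lying too close to $E_1$; isotropy bounds the mass near $E_1$. Then repeat restricted invertibility in the lower-dimensional space. After $O(\log d)$ rounds we obtain blocks $E_1,\dots,E_t$. We must check that the final $d$ clusters still force avoidance. If $u$ has a component of size at least $d^{-O(1)}$ in the first block, conditioning inside that block gives a large $|\inp{u}{w_i}|$. Otherwise $u$ is nearly orthogonal to $E_1$, and we pass to the next block, where the inverse transforms have polynomially bounded condition number. We expect the accumulated distortion over the $O(\log d)$ rounds to be $d^{O(\log d)}$ in the worst case. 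This is the step to control carefully: we would try to keep each round's distortion $O(1)$ so that the total stays $d^{O(1)}$, giving cap radius $d^{-O(1)}$, $d^{O(d)}$ caps, and sets of size $d^{-O(d)}|P|$.
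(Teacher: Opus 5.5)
The gap is the central step: your proposal never produces $d$ heavy directions with $\sigma_{\min}\ge d^{-O(1)}$. Your reduction to isotropic unit vectors is fine. Your avoidance criterion is also sound; the bound $\max_i|\langle u,w_i\rangle|\ge c_0/\sqrt d$ with caps of radius $c_0/(2\sqrt d)$ is a clean form of the paper's closing perturbation argument.

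Restricted invertibility, as you apply it, yields only $cd$ directions. Your $O(\log d)$-round iteration is left open, and by your own estimate it may cost $d^{O(\log d)}$. That would give caps of radius $d^{-O(\log d)}$ and sets of size only $d^{-O(d\log d)}|P|$. Your claim that the re-applied Forster--Barthe maps have polynomially bounded condition number is also unsupported. The isotropic position theorem gives no control on $\|T\|\,\|T^{-1}\|$. A later map acting on $E_1^{\perp}$ distorts the earlier caps, whose points have $E_1^{\perp}$-components of size up to $\delta$, so this is more than bookkeeping.

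The missing idea is that no iteration is needed. The sharp form of restricted invertibility (\Cref{thm:MSS}) applies to every $k<\strank(B)$, with $\sigma_{\min}(B_S)^2\ge(1-\sqrt{k/\strank(B)})^2\normF{B}^2/m$. The paper perturbs isotropy only at scale $n/d^2$: it discards fewer than $n/(8d^2)$ points, namely the non-$(\epsilon,k)$-density points of $P$ with $\epsilon=d^{-5/2}$, and it uses actual points of $P$ as centers rather than net centers. The stable rank then exceeds $d-\tfrac14$. Taking $k=d-1$ gives $\sigma_{\min}\ge 3/(8(d-1/4))=\Omega(1/d)$, a merely polynomial loss, which is all you need. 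The $d$-th row is a density point $v$ with $|\langle v,w\rangle|\ge 1/\sqrt{2d}$, where $w$ is a unit vector orthogonal to the span of the first $d-1$ rows. Such a point exists because more than $n/(2d)$ points qualify by \Cref{lemma:large_inner_product}. Then \Cref{lemma:min_singular_value} gives $\sigma_{\min}(D)=\Omega(d^{-3/2})$.

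Your near-isotropy bound $\succeq (n/2d)I$ is too weak for this. Discarding up to $n/d$ light-cap points can bring the stable rank down to exactly $d-1$, where the bound for $k=d-1$ is vacuous. So you would need both the total mass of the light caps and the operator-norm error from replacing points by cap centers to be at most a small constant times $n/d^2$.
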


\paragraph{Proof overview for \Cref{thm:main_geo}.}
The proof combines two deep ingredients from asymptotic convex geometry: the Forster--Barthe isotropic position theorem~\cite{MR1964645,MR1650312} and the restricted invertibility principle of Bourgain and Tzafriri~\cite{MR890420} (we use the refinement of this principle, introduced by Vershynin~\cite{MR1826503, MR2956233}, but in particular the formulation stated by Marcus, Spielman, and Srivastava \cite{MR4425347}).

By the Forster--Barthe theorem, we may assume without loss of generality that the points of $P$ lie on the unit sphere $\mathbb{S}^{d-1}$ in isotropic position. 
Since we work in the regime where $|P| \gg d$, most points behave as \emph{density points}: a sufficiently small neighbourhood around such a point contains a non-negligible fraction of nearby points of $P$. After discarding the few non-density points, we retain a large subset $P' \subseteq P$ that remains approximately isotropic.

Thanks to the approximate isotropy of the remaining points, we can apply the restricted invertibility theorem to select $d$ representative density points $u_1,\ldots,u_d$ from $P'$ whose position vectors form a $d \times d$ matrix $D$ with large least singular value $\sigma_{\min}(D) = \Omega(d^{-3/2})$. For each $i \in [d]$, we define $S_i$ to be the set of points of $P$ lying within a small spherical cap of radius $\epsilon = d^{-5/2}$ around $u_i$, and the density property guarantees $|S_i| \ge d^{-O(d)}|P|$.

Now, suppose by contradiction that there is a hyperplane defined by a vector $w \in \mathbb{R}^d$ which does not avoid all the sets $S_i$ of points we have constructed.
By the construction of the sets $S_i$, this means that for each row $u_i$, there is a vector $w_i$ near to $u_i$ such that $\langle w, w_i \rangle = 0$ for all $i$.
We can collect these $w_i$ vectors into the rows of another matrix $M$ so that $Mw = 0$. 
However, since $w_i$ and $u_i$ are close for each $i$, it follows that $\norm{D - M}$ must also be small. 
Since $Mw = 0$, the fact that $\norm{D - M}$ is small also implies that $\norm{Dw}$ is small, but this contradicts the fact that $D$ has a large minimum singular value. 
Hence, no such hyperplane $w$ can exist, and the proof is complete.

\subsubsection{Approximate Sign-Rank and VC-Dimension via Counting Arguments}

Finally, we give brief overviews of our proofs of \cref{thm:VC_less_asgnrank} and \cref{thm:alpha_signrank_separation}, both of which use counting arguments. \cref{thm:VC_less_asgnrank} follows from a simple counting argument using the Sauer-Shelah lemma, so we refer to \cref{sec:VC_less_asgnrank} for formal details, and focus on the proof of \cref{thm:alpha_signrank_separation}.

The proof of \cref{thm:alpha_signrank_separation} follows from two components.
The first component comes from prior work of Alon, Moran, and Yehudayoff \cite{DBLP:conf/colt/AlonMY16}, who showed that the number of sign matrices with VC dimension at most $2$ is quite large (at least $2^{\Omega(n^{3/2})}$), and the number of sign matrices with sign rank at most $d$ is relatively small (at most $2^{O(nd \log n)}$). Our separation will follow from counting the number of matrices with approximate sign-rank by reduction to the number of matrices with small sign-rank.

To do this, we derandomize approximate sign-rank by an Adleman-style argument \cite{MR539832}. Namely, we show that if $A$ is a sign matrix with $\signrank_\epsilon(A) \leq d$, then there are $n$ vectors $v_1, \dots, v_n \in \mathbb{R}^d$ and, for each $i \in [m]$, there are $k = O_{\epsilon}(\log n)$ vectors $u_{i, 1}, \dots, u_{i, k} \in \RR^d$ such that $A_{i,j}$ will be the majority vote of $\sign \langle u_{i,\ell}, v_j \rangle$ over each $\ell = 1, \dots, k$.
An easy embedding argument then shows that the number of $n \times n$ matrices with approximate sign rank at most $d$ will be bounded by the number of $nk \times nk$ matrices with sign rank at most $d$, which we can upper bound by the claim of \cite{DBLP:conf/colt/AlonMY16} stated above.

\subsection{Related works and open questions}
\label{sec:related-work}

We now survey the known methods for lower-bounding sign-rank and approximate sign-rank, and highlight what remains open.
We refer to \cite{MR4494342} for a more in-depth survey on lower bound methods for sign-rank.

\paragraph{Lower-bound methods for sign-rank.} 

There are essentially three existing methods for lower-bounding the sign-rank of a matrix $A$: \emph{VC dimension} \cite{paturi1986probabilistic}, \emph{monochromatic rectangle density} \cite{ALON2005310}, and the \emph{operator norm} \cite{MR1964645}.
We have already discussed VC dimension, so let us recall the other two methods.

Given a sign matrix $A \in \{\pm 1\}^{m \times n}$, define its \emph{monochromatic rectangle density} as
\[
\rho(A) \;\coloneqq\; \max_{\substack{R \subseteq [m],\, S \subseteq [n] \\ A_{R \times S} \text{ monochromatic}}} \frac{|R|\,|S|}{mn},
\]
and its \emph{rectangle complexity} as 
\[\rect(A) \coloneqq \max_{R \subseteq [m],\, S \subseteq [n]} \frac{1}{\rho(A_{R \times S})}.\]

For the exact sign-rank, the VC lower bound~\cite{paturi1986probabilistic} and the monochromatic rectangle lower bound of~\cite{ALON2005310} show 
\begin{equation}
\label{eq:signrank_lowers}
\VCdim(A) \;\le\; \signrank(A) \qquad\text{and}\qquad \log\rect(A) \;\le\; \signrank(A).
\end{equation}

Since any $2^d \times d$ sign matrix whose rows include all sign patterns in $\{\pm 1\}^d$ has monochromatic rectangle density $\rho = \max_{k \in [d]} k\, 2^{d-k}/(d\, 2^d) = 1/(2d)$, we always have $2\VCdim(A) \le \rect(A)$, and consequently $\log\rect(A) \ge \log\VCdim(A)$. 
So while the $\log\rect(A)$ bound in~\eqref{eq:signrank_lowers} can be weaker than the VC bound, it is at most exponentially so. 
In the other direction, the VC bound can be far weaker: as shown in~\cite[Theorem~3.2]{MR4494342}, there exist $n \times n$ matrices with $\VCdim = 2$ and $\rect = \Omega(n^2)$, for which the rectangle complexity gives a logarithmic lower bound while the VC dimension gives only a constant. 
The rectangle complexity is thus a qualitatively stronger lower-bound method.  

In \Cref{thm:main_rect,thm:VC_less_asgnrank}, we establish analogues of both bounds in~\eqref{eq:signrank_lowers} for approximate sign-rank. For every $\epsilon \in [0,1/2)$:
\begin{equation}
\label{eq:apx_signrank_lowers}
\Omega_\epsilon\bigl(\VCdim(A)\bigr) \;\le\; \signrank_\epsilon(A) \qquad\text{and}\qquad \Omega_\epsilon\!\left(\sqrt{\frac{\log\rect(A)}{\log\log\rect(A)}}\right) \;\le\; \signrank_\epsilon(A).
\end{equation}

The classic mistake-bound analysis of the Perceptron algorithm~\cite{MR10388,rosenblatt1958perceptron} (see also~\cite[Theorem 9.1]{shalev2014understanding} or \cite[Proposition 17]{alon2022theory}), shows $\VCdim(\cQ_\gamma^d) \le 1/\gamma^2$. Therefore, \Cref{thm:VC_less_asgnrank} can only yield a constant lower bound on $\signrank_\epsilon(\cQ^d_\gamma)$ for fixed $\gamma$, and in particular it cannot recover our super-constant lower bound in \Cref{thm:LargeMargin}.

\paragraph{The logarithmic barrier for approximate sign-rank.} 
Both VC dimension and rectangle complexity lower bounds are inherently limited to $O(\log n)$ for $n \times n$ matrices, since $\VCdim(A) \le \log_2 n$ and $\rect(A) \le n^2$. For the exact sign-rank, this logarithmic barrier stood for nearly two decades until the breakthrough of Forster~\cite{MR1964645}, who used the isotropic position theorem (\Cref{thm:forster}) to prove
\[
\signrank(A) \;\ge\; \frac{n}{\norm{A}},
\]
where $\norm{A}$ denotes the operator norm. For the Hadamard matrix, where Forster's bound gives exact sign-rank $\sqrt{n}$, we show in \Cref{thm:hadamard} that the approximate sign-rank is at most sub-polynomial in $n$. Therefore, unlike in the exact setting, a small operator norm $\norm{A}$ does not by itself imply a polynomial lower bound on approximate sign-rank. Proving a polynomial lower bound for any explicit matrix remains open (such matrices exist by \Cref{rem:random_matrices}).

\begin{problem}
\label{prob:explicit}
Prove a polynomial lower bound in $n$ on $\signrank_\epsilon(A)$ for an explicit family of sign matrices $A \in \{\pm 1\}^{n \times n}$ and a fixed $\epsilon \in (0,1/2)$.  
\end{problem}

\paragraph{Approximate sign-rank vs.~sign-rank.} Another central question is how much smaller the approximate sign-rank can be compared to the exact sign-rank.
We have already mentioned the result of Chornomaz, Moran, and Waknine showing a quasi-polynomial separation between exact and approximate sign-rank (cf.~\cref{thm:cmw-separation}) \cite{10.1145/3717823.3718213}.
The construction in \Cref{thm:cmw-separation} is simple: the domain consists of a set of points in $\mathbb{R}^d$, and each concept is a majority of the signs of three homogeneous half-spaces. The upper bound on $\frac{1}{3}$-approximate sign-rank follows directly from this representation, while the lower bound on exact sign-rank relies on a result showing that the class of intersections of two half-spaces in $\mathbb{R}^d$ (i.e., concepts of the form $x \mapsto \sign(\inp{u_1}{x})\wedge \sign(\inp{u_2}{x})$) have sign-rank at least $d^{\Omega(\log d)}$~\cite[Corollary 1.2]{MR4313288}.

We conjecture that a much stronger separation holds.

\begin{conjecture}
\label{conj:gap}
There exists a fixed constant $d \in \mathbb{N}$ such that there are sign matrices with
\[
\signrank_{1/3}(A)=d
\]
and arbitrarily large $\signrank(A)$.
\end{conjecture}

It is plausible that the same construction underlying \Cref{thm:cmw-separation} already witnesses this conjecture. This is closely related to a major open problem: whether concept classes defined by finite point sets in $\mathbb{R}^d$ and concepts given by intersections of two half-spaces have bounded sign-rank. While the sign-rank is known to be bounded for $d=3$~\cite[Proposition 3.12]{MR4494342}, it remains open in higher dimensions.

\paragraph{Approximate sign-rank vs.~probabilistic sign-rank.}  

Finally, we distinguish the \emph{approximate} sign-rank from the \emph{probabilistic} sign-rank, as introduced by Alman and Williams \cite{AlmanW17}, and further studied in a learning context by Kamath, Montasser, and Srebro \cite{KamathMS20}. 
If $\mathcal{M}$ is a distribution over $m \times n$ sign matrices and $A \in \{\pm 1\}^{m \times n}$, then we say $\mathcal{M}$ \emph{$\epsilon$-represents} $A$ if $\Pr_{M \sim \mathcal{M}}[M_{i,j} \neq  A_{i,j}] \leq \epsilon$ for all $i, j$, and the \emph{sign-rank} of $\mathcal{M}$ is defined to be the maximum sign-rank of any matrix in the support of $\mathcal{M}$.
\begin{definition}[cf.~\cite{AlmanW17}]
    The \emph{$\epsilon$-probabilistic sign-rank} of $A$, denoted $\widetilde{\signrank}_\epsilon(A)$, is the minimum $d$ for which there is a distribution of matrices $\mathcal{M}$ $\epsilon$-representing $A$ with sign-rank $d$.
\end{definition}
Looking ahead, it is perhaps easiest to compare this definition with the dual description of approximate sign-rank that we give in \cref{sec:basic_properties}.
In \cref{prop:dual_asr}, we show that if the approximate sign-rank of $A$ is $d$, then there are vectors $v_1, \dots, v_n \in \RR^d$ and distributions $\eta_1, \eta_2, \ldots, \eta_m$ over $\RR^d$ such that for each $i, j$, \[\Pr_{u \sim \eta_i}[\sign \langle u, v_j \rangle \neq A_{i,j}] \leq \epsilon.\]
Hence, in an approximate sign-rank decomposition of $A$, the vectors $v_j$ for each column $j$ are fixed, while the vectors $\eta_i$ for the rows are randomized, and we must correctly return each entry of the matrix with probability $\geq 1-\epsilon$.
In probabilistic sign-rank, we can randomize the matrix $M$ representing $A$ completely, subject to the condition that the sign-rank of $M$ is at most $d$. From this, we can immediately deduce that 
\[\widetilde{\signrank}_\epsilon(A) \leq \signrank_\epsilon(A),\]
and hence the approximate sign-rank is a restriction of the probabilistic sign-rank.

It is natural to ask how these two measures compare and whether or not they can be separated.
Our bound on the approximate sign-rank of the Hadamard matrix (cf.~\cref{thm:hadamard}) uses the same construction --- via low-degree probabilistic polynomials for symmetric functions \cite{10.1109/FOCS.2015.18} --- that Alman and Williams used \cite{AlmanW17} to show the probabilistic sign-rank of the Hadamard matrix is small in the constant-error regime.
We leave determining the precise relationship between these measures as an open problem:

\begin{problem}
    Determine the asymptotic relationship between probabilistic sign-rank and approximate sign-rank for arbitrary sign matrices.
\end{problem}

\paragraph{Organization.}  \Cref{sec:basic_properties} collects basic properties of approximate sign-rank, including a dual formulation via the minimax theorem. \Cref{sec:Proof_Geom} proves the main geometric theorem (\Cref{thm:main_geo}). \Cref{sec:rect_proof} establishes the monochromatic rectangle theorem (\Cref{thm:main_rect}) and the applications to the the large-margin half-spaces (\Cref{thm:LargeMargin}) and the Hadamard matrix (\Cref{thm:hadamard}). \Cref{sec:VC_proof} proves the VC dimension bounds (\Cref{thm:VC_less_asgnrank} and \Cref{thm:alpha_signrank_separation}).

\section{Basic properties of approximate sign-rank}
\label{sec:basic_properties}

In this short section, we collect a few preliminary observations on approximate sign-rank.

\subsection{A dual formulation}

We work throughout this section with partial sign matrices $A \in \{\pm 1,\star\}^{m \times n}$, where $A_{i,j} = \star$ indicates that the entry is undefined. The \emph{sign-rank} of such a matrix is the smallest $d$ for which there exist vectors $\{v_j \in \RR^d\}_{j \in [n]}$ and $\{u_i \in \RR^d\}_{i \in [m]}$ such that
\[
A_{i,j} = \sign\inp{u_i}{v_j}
\quad \text{for all } (i,j) \text{ with } A_{i,j} \neq \star.
\]
The notions of realizability and approximate sign-rank extend to partial matrices as described in earlier: a distribution over $[n] \times \{\pm 1\}$ is realizable by the $i$-th row of $A$ if it is supported on pairs $(j, A_{i,j})$ with $A_{i,j} \neq \star$, and $\signrank_\epsilon(A)$ is defined as in \Cref{def:approx_sign_rank} with distributions restricted accordingly.

The following reformulation of approximate sign-rank, obtained by swapping quantifiers via the minimax theorem, will be used throughout. Instead of requiring that, for every realizable distribution, there exists a good deterministic classifier, it asks that, for every row, there exists a good \emph{randomized} classifier.

\begin{proposition}[Dual formulation of approximate sign-rank]
\label{prop:dual_asr}
Let $A \in \{\pm 1,\star\}^{m \times n}$ be a partial sign matrix and let $\epsilon \ge 0$. Then $\signrank_\epsilon(A)$ is the smallest $d$ for which there exist vectors $\{v_j \in \RR^d\}_{j \in [n]}$ such that for every $i \in [m]$, there exists a finitely supported distribution $\eta_i$ over $\RR^d$ satisfying
\[
\Pr_{u \sim \eta_i}\bigl[\sign \inp{u}{v_j} \neq A_{i,j}\bigr] \;\le\; \epsilon
\quad \text{for all } j \in [n] \text{ with } A_{i,j} \neq \star.
\]
\end{proposition}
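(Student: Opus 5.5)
The plan is to fix the column vectors $\{v_j\}_{j\in[n]}\subseteq\RR^d$ and compare, row by row, the two quantifier orders using a finite zero-sum game. Once the $v_j$ are fixed, each row $i$ is handled separately. Let $J_i=\{j : A_{i,j}\neq\star\}$ (if $J_i=\emptyset$ both conditions hold trivially). The first player picks a column $j\in J_i$, the second picks $u\in\RR^d$, and the payoff is the loss $\ell(u,j)=\1[\sign\inp{u}{v_j}\neq A_{i,j}]$. The primal condition (\Cref{def:approx_sign_rank}) says $\max_{\mu}\min_{u}\Ex_{j\sim\mu}\ell(u,j)\le\epsilon$. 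The dual condition says $\min_{\eta}\max_{j}\Ex_{u\sim\eta}\ell(u,j)\le\epsilon$ over finitely supported $\eta$. It therefore suffices to show these two values coincide for each row and each fixed family $\{v_j\}$. Then the smallest admissible $d$ is the same in both formulations.

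The key reduction makes the second player's strategy set finite. The loss depends on $u$ only through the sign vector $(\sign\inp{u}{v_j})_{j\in J_i}\in\{-1,0,1\}^{J_i}$, which takes finitely many values. I will choose one representative $u_\sigma$ for each realized pattern $\sigma$, giving a finite set $U_i$. Replacing $\RR^d$ by $U_i$ changes neither $\min_u$ for a fixed $\mu$ nor the set of achievable loss vectors for mixed strategies. Indeed, any finitely supported $\eta$ can be pushed forward to a distribution on $U_i$ with the same loss vector, and any distribution on $U_i$ is itself finitely supported. This is exactly why the statement can insist on finitely supported $\eta_i$.

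Now the game is a finite matrix game with payoff matrix $(\ell(u,j))_{u\in U_i,\,j\in J_i}$. Von Neumann's minimax theorem gives
\[
\max_{\mu\in\Delta(J_i)}\min_{\eta\in\Delta(U_i)}\Ex\ell=\min_{\eta\in\Delta(U_i)}\max_{\mu\in\Delta(J_i)}\Ex\ell .
\]
The inner optimum against a fixed mixed strategy is attained at a pure strategy, so the left side equals $\max_\mu\min_u\Ex_{j\sim\mu}\ell(u,j)$. Likewise, the right side equals $\min_\eta\max_j\Pr_{u\sim\eta}[\sign\inp{u}{v_j}\neq A_{i,j}]$. Both formulations therefore hold for the same families $\{v_j\}$.

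For the easy direction, the dual implies the primal by averaging, without minimax. Given $\mu$, we have $\Ex_{u\sim\eta_i}\Pr_{j\sim\mu}[\text{error}]\le\epsilon$, so some $u$ in the support of $\eta_i$ works. I expect the only point requiring care to be the finiteness reduction, in particular handling the convention that $\sign 0$ is counted as an error or as a distinct value. This is settled by including the value $0$ in the pattern alphabet as above.
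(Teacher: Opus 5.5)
Your proposal is correct and is essentially the paper's argument: fix the embedding $\{v_j\}$, reduce the hyperplane player to a finite set of representatives (one per realized sign pattern), and apply von Neumann's minimax theorem row by row. The differences are cosmetic. You choose a per-row finite set $U_i$ where the paper uses a single finite set $P$ for all rows, and you spell out the pure-best-response step and the $\sign 0$ convention, which the paper leaves implicit.
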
 

\begin{proof}
The set of distinct labellings of $[n]$ of the form $j \mapsto \sign\inp{u}{v_j}$ as $u$ ranges over $\mathbb{R}^d$
is finite. Hence, there exists a finite set $P \subseteq \mathbb{R}^{d}$ that realizes all such labellings, and it suffices in both formulations to restrict $u$ to $P$.
For a finite set $K$, let $\Delta(K)$ denote the set of probability distributions on $K$. Fix a row index $i \in [m]$, and let
\[
N_i \;\coloneqq\; \{j \in [n] : A_{i,j} \neq \star\}
\]
be the set of columns on which the $i$-th row is defined. Any distribution over $[n] \times \{\pm 1\}$ realizable by the $i$-th row of $A$ is equivalent to a distribution over $N_i$, with labels determined by the $i$-th row. Thus, \Cref{def:approx_sign_rank} requires
\[
\max_{\mu \in \Delta(N_i)} \;\min_{u \in P} \;
\Ex_{j \sim \mu}\bigl[\1\bigl(\sign\inp{u}{v_j} \neq A_{i,j}\bigr)\bigr]
\;\le\; \epsilon,
\]
while the present formulation requires
\[
\min_{\eta \in \Delta(P)} \;\max_{j \in N_i} \;
\Ex_{u \sim \eta}\bigl[\1\bigl(\sign\inp{u}{v_j} \neq A_{i,j}\bigr)\bigr]
\;\le\; \epsilon.
\]
By von Neumann's minimax theorem, these two conditions are equivalent.
\end{proof}

\subsection{Approximate sign-rank can be very small}

The following proposition shows that approximate sign-rank collapses to a constant as soon as the error parameter approaches $1/2$.

\begin{proposition}
\label{prop:trivial_upper}
For every sign matrix $A \in \{\pm 1\}^{m \times n}$,\;
$\signrank_{1/2 - 1/(2n)}(A) \le 2$.
\end{proposition}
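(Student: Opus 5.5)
The plan is to use the dual formulation (\Cref{prop:dual_asr}). It suffices to fix vectors $v_1,\dots,v_n\in\RR^2$ and, for each row $a=(a_1,\dots,a_n)\in\{\pm1\}^n$ of $A$, exhibit a finitely supported distribution $\eta$ over $\RR^2$ such that $\Pr_{u\sim\eta}[\sign\inp{u}{v_j}= a_j]\ge \tfrac12+\tfrac1{2n}$ for every $j$. The idea is that homogeneous half-planes in $\RR^2$ realize all threshold functions on points ordered by angle, and every sign vector is a small signed combination of thresholds.

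\textbf{Choice of points and realizable labellings.} Take $v_j=(\cos\theta_j,\sin\theta_j)$ with $0<\theta_1<\dots<\theta_n<\pi$. For $k=1,\dots,n-1$, pick an angle $\phi_k\in(\theta_k,\theta_{k+1})$ and let $u_k$ be the unit vector at angle $\phi_k+\pi/2$. Then $\inp{u_k}{v_j}=\sin(\phi_k+\pi/2-\theta_j)=\cos(\phi_k-\theta_j)$, which is positive iff $\theta_j>\phi_k-\pi/2$... To keep this clean, I will instead directly choose $u_k$ orthogonal to the direction $\phi_k$, with sign chosen so that $\sign\inp{u_k}{v_j}=+1$ for $j\le k$ and $-1$ for $j>k$. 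This is possible because the line at angle $\phi_k$ through the origin separates $\{v_j\}_{j\le k}$ from $\{v_j\}_{j>k}$ on the upper semicircle. Call this step function $\mathrm{st}_k$. Its negation is realized by $-u_k$. The constants $\pm1$ are realized by $u=(0,\pm1)$, since every $v_j$ lies in the open upper half-plane. None of these $u$ is orthogonal to any $v_j$, so all signs are nonzero.

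\textbf{Telescoping decomposition.} For a row $a$, a direct telescoping computation gives, for every $j$,
\[
\sum_{k=1}^{n-1}\frac{a_k-a_{k+1}}{2}\,\mathrm{st}_k(j)=\frac{a_j-a_n}{2}-\frac{a_1-a_j}{2}=a_j-\frac{a_1+a_n}{2},
\]
so $a_j=\frac{a_1+a_n}{2}\cdot 1+\sum_{k}\frac{a_k-a_{k+1}}{2}\,\mathrm{st}_k(j)$. Every coefficient lies in $\{-1,0,1\}$. Hence $a=\sum_{\ell=1}^W h_\ell$, where each $h_\ell$ is one of the realizable labellings $\pm\mathrm{st}_k$ or $\pm1$, and $1\le W\le n$. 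Here $W\ge1$ because $a\neq 0$.

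\textbf{The distribution.} Let $\eta$ be uniform over the $W$ vectors realizing $h_1,\dots,h_W$. Then $\Ex_{u\sim\eta}[\sign\inp{u}{v_j}]=a_j/W$, so
\[
\Pr[\sign\inp{u}{v_j}=a_j]=\frac{1+1/W}{2}\ge\frac12+\frac1{2n}.
\]
This is exactly the error bound $\tfrac12-\tfrac1{2n}$, and \Cref{prop:dual_asr} gives the claim.

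The only delicate point is the geometric realizability of the step functions with nonzero inner products. This is handled by placing the points strictly inside the open upper half-plane at distinct angles; the rest is bookkeeping.
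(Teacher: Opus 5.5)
Your proof is correct, but it takes a different route from the paper's.

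\textbf{The paper's route.} It works with the primal definition. Given a row $i$ and a distribution $\mu$, it picks an atom $j$ with $\mu(j)\ge 1/n$ and takes $w\perp v_j$. It then compares the two classifiers $\pm w+\delta A_{ij}v_j$. Both classify $j$ correctly and are complementary on every other atom, so their losses sum to $1-\mu(j)\le 1-1/n$.

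\textbf{Your route.} You never look at $\mu$. Instead you build, for each row, one explicit randomized classifier. It comes from the telescoping identity $a=\frac{a_1+a_n}{2}\mathbf{1}+\sum_k\frac{a_k-a_{k+1}}{2}\,\mathrm{st}_k$, which writes the row as a sum of at most $n$ realizable labellings. Averaging uniformly over them gives correlation exactly $1/W\ge 1/n$ with every entry. Only the trivial averaging direction of \Cref{prop:dual_asr} is needed, so no minimax argument enters.

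\textbf{Comparison.} The paper's argument is shorter. It works for any non-antipodal configuration on $\mathbb{S}^1$ and needs no structural decomposition. Yours produces a $\mu$-independent certificate: each row is a $1/W$-margin combination of at most $n$ homogeneous half-planes. This makes the source of the $1/(2n)$ transparent. It also gives the adaptive bound $\tfrac12-\tfrac1{2W}$, where $W$ is essentially the number of sign changes of the row in angular order plus one.

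\textbf{A correction.} The aside you abandon in the realizability paragraph is wrong as written. In fact $\inp{u_k}{v_j}=\cos(\phi_k+\pi/2-\theta_j)=\sin(\theta_j-\phi_k)$, which is positive iff $j>k$. So the vector at angle $\phi_k+\pi/2$ realizes $-\mathrm{st}_k$, and its negation realizes $\mathrm{st}_k$. The separating-line argument you switch to is fine, so simply delete the aside.
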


\begin{proof}
Choose distinct points $v_1,\ldots,v_n \in \mathbb{S}^1$ with no two antipodal. Fix a row index $i \in [m]$ and a distribution $\mu$ on $[n]$. Let $j$ be an atom of maximum mass under $\mu$; since $\mu$ is supported on at most $n$ atoms, $\mu(j) \ge 1/n$.

Let $w \in \mathbb{S}^1$ be orthogonal to $v_j$. Since no two of the $v_{j'}$ are antipodal, $\inp{w}{v_{j'}} \neq 0$ for every $j' \neq j$. For sufficiently small $\delta > 0$, the vectors
\[
u \;\coloneqq\; w + \delta A_{ij}\, v_j
\qquad\text{and}\qquad
u' \;\coloneqq\; -w + \delta A_{ij}\, v_j
\]
both satisfy $\sign\inp{u}{v_j} = \sign\inp{u'}{v_j} = A_{ij}$, while $\sign\inp{u}{v_{j'}} = -\sign\inp{u'}{v_{j'}}$ for every $j' \neq j$. In particular, both $u$ and $u'$ correctly classify $(j, A_{ij})$, and for every other atom exactly one of the two classifies it correctly. Therefore
\[
\loss_{i,\mu}(u) + \loss_{i,\mu}(u') \;=\; 1 - \mu(j) \;\le\; 1 - \tfrac{1}{n},
\]
and the better of the two achieves population loss at most $\tfrac{1}{2} - \tfrac{1}{2n}$.
\end{proof}

The dimensional collapse in \Cref{prop:trivial_upper} requires discontinuous dependence of the classifier $u$ on the distribution $\mu$. The following proposition shows that enforcing continuity could force the dimension back up to $n$.

\begin{proposition}[Borsuk--Ulam obstruction for continuous classifiers]
\label{prop:borsuk_ulam}
Let $\cC = \{\pm 1\}^n$ and $\epsilon < 1/2$. If $\signrank_\epsilon(\cC) \le d$ and the map selecting the classifier can be taken to be continuous, then $d \ge n$.
\end{proposition}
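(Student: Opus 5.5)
We argue via Borsuk--Ulam. Assume there are vectors $v_1,\dots,v_n \in \RR^d$ and a continuous map $\Phi$ that assigns to each realizable distribution a classifier $u = \Phi(c,\mu) \in \RR^d$ with population loss at most $\epsilon$. Since $\cC = \{\pm 1\}^n$, every labelled distribution on $[n]\times\{\pm1\}$ (one label per point) is realizable. We parametrize these by the sphere: to $x \in \mathbb{S}^{n-1}$ we associate the distribution $\mu_x(j) = x_j^2$ with labels $c_x(j) = \sign(x_j)$. Coordinates with $x_j = 0$ carry zero mass, so their label does not matter. The natural continuity hypothesis is that the classifier depends continuously on the labelled distribution, i.e.\ on the measure $\sum_j \mu(j)\delta_{(j,c(j))}$ in the weak topology. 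Under that reading, $x \mapsto u(x) := \Phi(c_x,\mu_x)$ is a continuous map $\mathbb{S}^{n-1} \to \RR^d$. This works because as $x_j$ crosses $0$ the label flips exactly when the mass vanishes.

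Suppose for contradiction that $d \le n-1$. Compose with the inclusion $\RR^d \hookrightarrow \RR^{n-1}$ to get a continuous $f : \mathbb{S}^{n-1} \to \RR^{n-1}$. The Borsuk--Ulam theorem then gives a point $x$ with $u(x) = u(-x) =: u$.

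Now compare the two distributions. They have the same weights, $\mu_{-x} = \mu_x$, and opposite labels, $c_{-x} = -c_x$ on the support. For each $j$ in the support, $\sign\inp{u}{v_j}$ disagrees with at least one of $c_x(j)$ and $-c_x(j)$. If $\inp{u}{v_j}=0$ with the convention $\sign 0$ counted as an error or as a fixed value, it still disagrees with at least one of the two. Hence $\loss_{x}(u) + \loss_{-x}(u) \ge \sum_j \mu_x(j) = 1$, so one of the losses is at least $1/2 > \epsilon$. This contradicts the assumption, and therefore $d \ge n$.

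The main obstacle is choosing the topology that makes continuity meaningful, so that the parametrization $x \mapsto (c_x,\mu_x)$ is continuous and Borsuk--Ulam applies. I would state explicitly that continuity is with respect to the weak topology on labelled distributions, or equivalently continuity of $x \mapsto \Phi(c_x,\mu_x)$ on the sphere. I would also note that the sign convention at $0$ does not affect the argument, since each support atom is misclassified under at least one of the two labellings.
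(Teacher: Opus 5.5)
Your proposal is correct and follows essentially the same route as the paper: parametrize the realizable labelled distributions by $\mathbb{S}^{n-1}$, apply Borsuk--Ulam to find an antipodal pair of distributions assigned the same classifier, and note that their losses sum to at least $1$. The differences are cosmetic. You use weights $x_j^2$ where the paper uses $|f_j|/\|f\|_1$, and you treat the $\sign 0$ convention a bit more carefully by proving $\ge 1$ where the paper asserts equality.
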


\begin{proof}
Suppose $\signrank_\epsilon(\cC) \le d$, and let $\{v_j \in \RR^d\}_{j \in [n]}$ be the corresponding embedding. Write $\Delta_\cC$ for the set of all distributions on $[n] \times \{\pm 1\}$ realizable by $\cC$. By definition, for every $\mu \in \Delta_\cC$ there exists $\psi(\mu) \in \RR^d$ such that
\[
\loss_\mu\!\bigl(j \mapsto \sign \inp{\psi(\mu)}{v_j}\bigr) \;\le\; \epsilon.
\]

Suppose for contradiction that $\psi\colon\Delta_\cC \to \RR^d$ is continuous and $d < n$. Identify each $f \in \mathbb{S}^{n-1}$ with the distribution $\mu_f \in \Delta_\cC$ that assigns mass $|f_i|/\norm{f}_1$ to the labelled example $(i,\,\sign(f_i))$ for each $i \in [n]$. This map is a homeomorphism between $\mathbb{S}^{n-1}$ and $\Delta_\cC$, so $f \mapsto \psi(\mu_f)$ is a continuous map from $\mathbb{S}^{n-1}$ to $\RR^d$. Since $d < n$, the Borsuk--Ulam theorem~(see~\cite{matousek2003borsuk}) produces $f \in \mathbb{S}^{n-1}$ with $\psi(\mu_f) = \psi(\mu_{-f})$. Writing $u$ for this common value, the hypothesis $h\colon j \mapsto \sign\inp{u}{v_j}$ satisfies $\loss_{\mu_f}(h) \le \epsilon$ and $\loss_{\mu_{-f}}(h) \le \epsilon$. But $\mu_f$ and $\mu_{-f}$ assign the same masses to the same domain points with every label negated, so $\loss_{\mu_f}(h) + \loss_{\mu_{-f}}(h) = 1$, contradicting $\epsilon < 1/2$.
\end{proof}

\section{Proof of the hyperplane avoidance theorem}
\label{sec:Proof_Geom}

This section is devoted to the proof of \Cref{thm:main_geo}. \Cref{subsec:prelim_geom} collects the geometric and analytic tools we will need: the Forster--Barthe isotropic position theorem, the restricted invertibility principle, and a basic estimate for the number of density points. The proof itself appears in \Cref{subsec:proof_geom}.

\subsection{Preliminaries}
\label{subsec:prelim_geom}

\paragraph{Notation.} We denote the singular values of a matrix $B \in \RR^{m\times d}$ by $\sigma_1(B) \ge \sigma_2(B) \ge \cdots \ge \sigma_{\min(m,d)}(B) \ge 0$. The \emph{Frobenius norm} and \emph{operator norm} of $B$ are
\[
\normF{B} \coloneqq \sqrt{\sum_{i=1}^{m}\sum_{j=1}^{d} B_{ij}^2}
= \sqrt{\sum_{i} \sigma_i(B)^2},
\qquad
\norm{B} \coloneqq \max_{x \ne 0} \frac{\norm{Bx}}{\norm{x}} = \sigma_1(B),
\]
respectively, where $\norm{\cdot}$ on vectors denotes the Euclidean norm. We also write
\[
\sigma_{\min}(B) \;\coloneqq\; \min_{x \ne 0} \frac{\norm{Bx}}{\norm{x}}
\]
for the least singular value of $B$. For $S \subseteq [m]$, we write $B_S$ for the submatrix of $B$ consisting of the rows indexed by $S$.

\paragraph{Isotropic position.}

\begin{definition}[Isotropic position]
Vectors $v_1,\ldots,v_n \in \RR^d$ are in \emph{isotropic position} if
\[
\frac{1}{n}\sum_{i=1}^n \inp{u}{v_i}^2 \;=\; \frac{1}{d}
\qquad\text{for every } u \in \mathbb{S}^{d-1}.
\]
\end{definition}

The following result, implicit in the work of Barthe~\cite{MR1650312} and established independently by Forster~\cite{MR1964645}, shows that any spanning set of vectors can be placed in isotropic position by an invertible linear transformation followed by normalization. Forster used this fact to obtain the first linear lower bound on the sign-rank of an explicit matrix.

\begin{theorem}[\cite{MR1650312,MR1964645}]
\label{thm:forster}
Given $v_1,\ldots,v_n \in \RR^d$ in general position, there exists an invertible $T\colon\RR^d\to\RR^d$ such that the normalized vectors $\{Tv_i/\|Tv_i\|\}_{i=1}^n$ lie in isotropic position on $\mathbb{S}^{d-1}$.
\end{theorem}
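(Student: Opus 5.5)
The plan is to obtain $T$ as the square root of a minimizer of a scale-invariant potential on positive definite matrices. First, the degenerate case $n = d$ is immediate. By general position the $v_i$ form a basis, so we take $T$ to be the linear map sending $v_i$ to $e_i$, and $\{e_i\}$ is isotropic. Assume henceforth $n \ge d+1$. For positive definite $A$ define
\[
F(A) \;\coloneqq\; \sum_{i=1}^n \log\bigl(v_i^{\top} A v_i\bigr) \;-\; \frac{n}{d}\log\det A .
\]
This is well defined since $v_i \neq 0$, and it is invariant under $A \mapsto cA$ for $c>0$, so we may restrict to the set $\{A \succ 0 : \operatorname{tr} A = 1\}$.

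Suppose $A$ is a critical point of $F$. Setting the gradient to zero gives
\[
\sum_{i=1}^n \frac{v_i v_i^{\top}}{v_i^{\top} A v_i} \;=\; \frac{n}{d}\,A^{-1}.
\]
Put $T = A^{1/2}$ and $w_i = Tv_i/\norm{Tv_i}$, noting that $\norm{Tv_i}^2 = v_i^{\top}Av_i$. Conjugating the critical-point equation by $T$ yields
\[
\sum_i w_i w_i^{\top} \;=\; T\Bigl(\tfrac{n}{d}A^{-1}\Bigr)T \;=\; \tfrac{n}{d} I,
\]
which is exactly the isotropy condition $\frac1n\sum_i\inp{u}{w_i}^2 = \frac1d$ for every unit $u$. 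So it suffices to show that $F$ attains its infimum on the open set $\{A\succ 0,\ \operatorname{tr}A=1\}$. Any such minimizer is a critical point of the scale-invariant function $F$, because the trace constraint only removes the scaling direction, along which $F$ is constant.

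The main obstacle is coercivity: we must show $F(A) \to +\infty$ as $A$ approaches the boundary, that is, as $\lambda_{\min}(A) \to 0$ with $\operatorname{tr}A = 1$. To do this, write the eigenvalues of $A$ as $\lambda_1 \ge \cdots \ge \lambda_d$ with orthonormal eigenvectors $e_1,\dots,e_d$, and set $V_k = \operatorname{span}(e_1,\ldots,e_k)$.

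1. Each $v_i$ has some smallest index $k(i)$ for which $v_i \notin V_{k(i)-1}^{\perp}$. For that index, $v_i^{\top}Av_i \ge \lambda_{k(i)}\,c$, where $c>0$ is a uniform lower bound on $|\inp{v_i}{e_{k(i)}}|^2$. Obtaining $c$ requires a compactness argument over the eigenbasis, which I would pass to by taking a subsequence of eigenbases converging along a hypothetical minimizing sequence.
2. By general position, any $k$-dimensional subspace contains at most $k$ of the $v_i$. Hence the number of $i$ with $k(i) > k$ (those $v_i$ lying in $V_k^{\perp}$, a $(d-k)$-dimensional space) is at most $d-k$.
3. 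Consequently $\sum_i \log(v_i^{\top}Av_i) \ge \sum_i \log\lambda_{k(i)} - O(1)$. An Abel-summation argument then compares this with $\frac nd\sum_j\log\lambda_j$. The key input is the strict inequality $d-k < \frac{n}{d}(d-k)$ for $k<d$ when $n>d$: the fraction of points that are forced onto the small eigenvalues is strictly less than the weight $\frac nd$ that $\log\det$ assigns them. This forces $F(A) \ge -O(1) + \delta\log(\lambda_1/\lambda_d)$ for some $\delta>0$, so $F \to \infty$ at the boundary.

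Once this is in place, the sublevel sets of $F$ are compact in the interior, a minimizer exists, and the critical-point computation above finishes the proof. If the uniform constant $c$ in step 1 proves awkward, the fallback is the standard argument along a minimizing sequence: pass to a subsequence whose eigenbases converge, and apply the counting bound of step 2 to the limiting flag of subspaces.
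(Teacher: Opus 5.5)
The paper does not prove this statement; it is quoted from Barthe and Forster, so there is no in-paper argument to match. Your variational proof is correct, modulo two repairs. It holds for $n\ge d$, which the statement implicitly needs, and with general position read as ``every $d$ of the $v_i$ are linearly independent'', which is exactly what your step~2 uses.

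\textbf{First repair.} $k(i)$ should be the smallest $k$ with $v_i\notin V_k^{\perp}$, i.e.\ the first index with $\inp{v_i}{e_k}\neq 0$. With $V_{k(i)-1}^{\perp}$ as written it is shifted by one: for $v_i=e_1$ it gives $k(i)=2$, where $\inp{v_i}{e_{k(i)}}=0$.

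\textbf{Second repair.} No constant $c$ works uniformly in $A$: rotate the eigenbasis so that $\inp{v_i}{e_1}\to 0$ while staying nonzero. So what you call the fallback is the actual proof. Given $A_t$ in the trace-one slice with $\lambda_d(A_t)\to 0$, pass to a subsequence whose ordered eigenbases converge to some $e^*$, and define $k(i)$ and the flag from $e^*$. Set $C_k\coloneqq\#\{i:k(i)>k\}-\frac nd(d-k)$, so that $C_0=0$ and $C_k\le -(d-k)(n-d)/d$ for $1\le k<d$. Your Abel summation then gives $F(A_t)\ge -K+\frac{n-d}{d}\log(\lambda_1/\lambda_d)\to\infty$ along the subsequence, with $K$ depending on $e^*$. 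Since every such sequence has such a subsequence, sublevel sets are compact and a minimizer exists. The rest of your argument (scale invariance, the critical-point equation, and conjugating by $T=A^{1/2}$) is fine.

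\textbf{Comparison with Barthe's route.} The weight-side formulation going back to Barthe avoids the eigenbasis compactness altogether. Writing $W_S$ for the $d\times d$ matrix with rows $v_i$, $i\in S$, Cauchy--Binet gives
\[
\phi(t)\coloneqq\log\det\Bigl(\sum_{i}e^{t_i}v_iv_i^{\top}\Bigr)-\frac dn\sum_i t_i=\log\sum_{|S|=d}\det(W_S)^2e^{\inp{\1_S}{t}}-\frac dn\inp{\1}{t}.
\]
This is a convex log-sum-exp on $\RR^n$, invariant under $t\mapsto t+s\1$. Under general position every $\det(W_S)\neq 0$, so coercivity on $\1^{\perp}$ reduces to the fact that $\frac dn\1$ lies in the relative interior of the hypersimplex when $n>d$. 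A minimizer satisfies $e^{t_i}v_i^{\top}M^{-1}v_i=\frac dn$ with $M=\sum_i e^{t_i}v_iv_i^{\top}$, and $T=M^{-1/2}$ works.

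Your $F$ is the matrix-side counterpart: eliminating $t$ gives $\inf_t\phi=\frac dn\inf_A F-d\log\frac dn$. It is only geodesically convex, which is why coercivity costs you the flag-and-compactness step. In exchange, it hands you $T=A^{1/2}$ directly and makes visible exactly where the strict inequality $d-k<\frac nd(d-k)$ enters.
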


We record the following elementary consequence of isotropy, which asserts that there are several points that are far away from any given homogeneous hyperplane.  

\begin{lemma}
\label{lemma:large_inner_product}
Let $v_1,\ldots,v_n \in \mathbb{S}^{d-1}$ be in isotropic position. Then for every $w \in \mathbb{S}^{d-1}$,
\[
\left|\left\{i \in [n] \mid |\inp{v_i}{w}| \ge \tfrac{1}{\sqrt{2d}}\right\}\right|
> \frac{n}{2d}.
\]
\end{lemma}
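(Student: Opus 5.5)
We argue by contradiction via a second-moment (averaging) argument. Fix $w \in \mathbb{S}^{d-1}$ and set $a_i \coloneqq \inp{v_i}{w}^2$. Isotropy gives
\[
\sum_{i=1}^n a_i = \frac{n}{d}.
\]
Since each $v_i$ and $w$ are unit vectors, Cauchy--Schwarz gives $0 \le a_i \le 1$. Let
\[
I \coloneqq \{i \in [n] : a_i \ge \tfrac{1}{2d}\},
\]
which is exactly the set in the statement, since $|\inp{v_i}{w}| \ge 1/\sqrt{2d}$ if and only if $a_i \ge 1/(2d)$.

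We split the total mass according to membership in $I$. Indices in $I$ contribute at most $|I|$, because $a_i \le 1$. Indices outside $I$ contribute strictly less than $\tfrac{1}{2d}$ each, so together they contribute strictly less than $(n-|I|)\cdot\tfrac{1}{2d}$; the inequality is strict unless $[n]\setminus I$ is empty. Hence
\[
\frac{n}{d} = \sum_i a_i < |I| + \frac{n - |I|}{2d} \le |I| + \frac{n}{2d}.
\]
Rearranging yields $|I| > \tfrac{n}{2d}$.

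The only step needing care is strictness. If $I = [n]$, then $|I| = n > n/(2d)$ trivially. Otherwise the strict inequality from the complement carries through, and since $n \ge 1$ we get $|I| > n/(2d)$.

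We do not expect any real obstacle. The entire content is the Markov-type split of the mass $\sum_i a_i = n/d$ using the pointwise cap $a_i \le 1$. The proof is essentially one line once the isotropy identity is written out.
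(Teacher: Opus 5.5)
Your proof is correct and is essentially the paper's argument: split $\sum_i \inp{v_i}{w}^2 = n/d$ according to the threshold set, bound each term inside by $1$ and each term outside strictly by $1/(2d)$, then rearrange. Your separate handling of the case $I=[n]$ is slightly more careful than the paper, whose strict inequality fails in the degenerate case $d=1$, where every point lies in the set; the conclusion still holds there.
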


\begin{proof}
Let $T = \{i \mid |\inp{v_i}{w}| \ge 1/\sqrt{2d}\}$. Isotropy gives $\sum_{i=1}^n \inp{v_i}{w}^2 = n/d$, while splitting the sum according to $T$ yields
\[
\frac{n}{d} = \sum_{i=1}^n \inp{v_i}{w}^2 \;<\; |T| + \frac{n-|T|}{2d} \leq |T| + \frac{n}{2d}.
\]
Rearranging yields $|T| > n/(2d)$.
\end{proof}

\paragraph{Restricted invertibility.}

The restricted invertibility principle of Bourgain and Tzafriri~\cite{MR890420} is a quantitative refinement of the assertion that the rank of an $m \times d$ matrix equals the maximum number of linearly independent rows it contains. Define the \emph{stable rank} of $B \in \RR^{m\times d}$ as 
\[
\strank(B) \;\coloneqq\; \frac{\normF{B}^2}{\norm{B}^2},
\]
which satisfies $\strank(B) \le \rank(B)$. The principle asserts that when $\strank(B)$ is large, there exists a large subset of rows $S$ for which $\sigma_{\min}(B_S)$ is bounded away from zero.

The original formulation in~\cite{MR890420} produced $|S|=\Omega(m)$; sharp quantitative forms valid for every $|S| < \strank(B)$ were subsequently established by Vershynin~\cite{MR1826503}, Spielman and Srivastava~\cite{MR2956233}, and Marcus, Spielman, and Srivastava~\cite{MR4425347}. We will use the following form of Vershynin's bound, as stated in~\cite{MR4425347}.

\begin{theorem}[{\cite[Theorem 1.1]{MR4425347}}]
\label{thm:MSS}
Let $B \in \RR^{m\times d}$ and let $k \le \strank(B)$ be a positive integer. Then there exists $S \subseteq [m]$ of size $k$ with
\[
\sigma_{\min}(B_S)^2 \;\ge\; \left(1 - \sqrt{\tfrac{k}{\strank(B)}}\right)^{2} \frac{\normF{B}^2}{m}.
\]
\end{theorem}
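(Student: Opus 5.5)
The plan is to follow the \emph{interlacing families} argument of Marcus, Spielman, and Srivastava, with a barrier-function estimate at the end. Let $b_1,\ldots,b_m \in \RR^d$ be the rows of $B$, so that $B^\top B = \sum_{i} b_i b_i^\top$. For a set $S$ of $k$ rows, $\sigma_{\min}(B_S)^2$ is the $k$-th largest eigenvalue of $\sum_{i\in S} b_ib_i^\top$. Equivalently, it is the smallest nonzero root of $x^{-(d-k)}\chi\bigl[\sum_{i \in S} b_ib_i^\top\bigr](x)$, where $\chi$ is the characteristic polynomial. The goal is therefore to show that for some $k$-subset this root is at least $(1-\sqrt{k/\strank(B)})^2\normF{B}^2/m$.

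First, I would sample $k$ rows independently and uniformly with replacement, and compute the expected characteristic polynomial $p_k(x)=\Ex\,\chi\bigl[\sum_{t=1}^k b_{i_t}b_{i_t}^\top\bigr](x)$. A rank-one update identity (the matrix determinant lemma averaged over $i$) gives $\Ex_i\,\chi[M+b_ib_i^\top] = \bigl(1-\tfrac1m\partial_x\bigr)\chi[M]$. Iterating yields
\[
p_k(x) = \Bigl(1-\tfrac1m\partial_x\Bigr)^k x^d,
\]
since the starting matrix is $0$. With a little more care, as in MSS, one keeps track of $B^\top B$ through the operator $\bigl(1-\tfrac1m\partial_x\bigr)^k$ applied with respect to $\chi[B^\top B]$ in a dual variable, so that the normalization involves $\normF{B}^2$ and $\norm{B}^2$. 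Repeated rows contribute nothing beyond smaller subsets, and one restricts to distinct indices at the end.

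Second, I would show that the polynomials $\chi\bigl[\sum_t b_{i_t}b_{i_t}^\top\bigr]$, indexed by the partial choice sequences $(i_1,\ldots,i_t)$, form an interlacing family. Each conditional expectation is real-rooted because $(1-c\,\partial_x)$ preserves real-rootedness. Moreover, the children of a node have a common interlacing, since each child is a rank-one positive update of the same matrix. The standard interlacing-family lemma then produces a specific choice $S$ whose $k$-th largest root is at least the $k$-th largest root of $p_k$ (respectively, its $B$-dependent analogue).

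The main obstacle is the third step: lower bounding the smallest relevant root of $p_k$ by $(1-\sqrt{k/\strank(B)})^2\normF{B}^2/m$. These are essentially associated Laguerre polynomials, so one could quote known root bounds. I would instead run the lower barrier argument directly. Track $\Phi(p,b)=\sum_j 1/(\lambda_j-b)$, or equivalently $-p'/p$ evaluated below the roots. Show that each application of $(1-\tfrac1m\partial_x)$ can be matched by shifting the barrier $b$ up by a controlled amount while keeping $\Phi$ nonincreasing, using convexity of $1/(\lambda-b)$. Optimising the initial barrier position and the shift per step, with the constraint from $\norm{B}^2$ (the largest eigenvalue) and the total mass $\normF{B}^2$, yields exactly the factor $(1-\sqrt{k/\strank(B)})^2$. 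Getting this optimisation to close sharply, rather than with a lossy constant, is where I expect the real work; the constant factors are irrelevant for the application in this paper, however.
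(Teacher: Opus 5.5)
The paper does not prove this statement; it quotes it from Marcus--Spielman--Srivastava, and your outline (sample rows with replacement, an interlacing family, a lower barrier) is essentially their route. As written, though, your core computation only covers the isotropic case $B^\top B = cI$, which is a genuine gap for the theorem as stated.

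The averaged rank-one identity $\Ex_i \chi[M+b_ib_i^\top] = (1-\tfrac1m\partial_x)\chi[M]$ uses $\sum_i b_ib_i^\top = I$. In general the matrix determinant lemma gives $\chi[M](x) - \tfrac1m\operatorname{tr}\bigl(\operatorname{adj}(xI-M)\,B^\top B\bigr)$, which is not a function of $\chi[M]$. For instance, take $m=d=k=2$, $b_1=e_1$, $b_2=2e_2$. The expected polynomial is $x^2-5x+2$, whereas $(1-\tfrac12\partial_x)^2x^2 = x^2-2x+\tfrac12$, and no polynomial of the form $(1-c\,\partial_x)^2x^2$ matches it. So $p_k \neq (1-\tfrac1m\partial_x)^k x^d$. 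Your Step 2 justification (real-rootedness because $1-c\,\partial_x$ preserves it, common interlacing because siblings are rank-one updates of one matrix) relies on that same operator form.

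If your ``dual variable'' remark means $p_k(x) = (1-\partial_s)^k\det(xI+\tfrac sm B^\top B)\big|_{s=0}$, that formula is right. But Steps 2 and 3 are not carried out for it, and this is exactly where $\strank(B)$ has to enter. Note that the paper applies the theorem to $B_{\bar T}$, which is only approximately isotropic.

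Step 3 also fails as framed, even in the isotropic case. The polynomial $(1-\tfrac1m\partial_x)^t x^d$ has a root of multiplicity $d-t$ at $0$, so a barrier that must stay below all roots never gets past $0$. The missing identity is
\[
p_k(x) \;=\; x^{d-k}\prod_{j=1}^{d}\bigl(1-\lambda_j\partial_x\bigr)\,x^k, \qquad \lambda_1,\ldots,\lambda_d \text{ the eigenvalues of } \tfrac1m B^\top B .
\]
By Cauchy--Binet, both sides equal $\sum_{j}(-1)^j\tfrac{k!}{(k-j)!}\,e_j(\tfrac1m B^\top B)\,x^{d-j}$. Real-rootedness of the conditional expectations and common interlacing of siblings then come from real stability of $\det(xI+sM-N)$, i.e.\ the mixed characteristic polynomial machinery.

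Then run the barrier on $x^k$ through the $d$ operators $1-\lambda_j\partial_x$. Write $\Phi_p(b)=\sum_r 1/(r-b)$ over the roots $r$ of $p$. If $b$ lies below all roots of $p$ and $\Phi_p(b)\le\phi$, then $q=(1-\lambda\partial_x)p$ has all roots above $b+\delta$ and $\Phi_q(b+\delta)\le\phi$ for $\delta=\lambda/(1+\lambda\phi)$; this takes one application of Cauchy--Schwarz. Start from $b_0=-k/\phi$. The smallest root of $\prod_j(1-\lambda_j\partial_x)x^k$, which is the $k$-th largest root of $p_k$, then exceeds
\[
-\frac{k}{\phi}+\sum_{j=1}^{d}\frac{\lambda_j}{1+\lambda_j\phi} \;\ge\; -\frac{k}{\phi}+\frac{\Lambda}{1+\lambda_{\max}\phi}, \qquad \Lambda=\frac{\normF{B}^2}{m},\quad \lambda_{\max}=\frac{\norm{B}^2}{m}.
\]
For $k<\strank(B)$, the choice $\lambda_{\max}\phi=\sqrt{k}/(\sqrt{\strank(B)}-\sqrt{k})$ gives exactly $(1-\sqrt{k/\strank(B)})^2\normF{B}^2/m$.

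Finally, this sharpness is not cosmetic, contrary to your closing remark. The paper uses $k=d-1$ with $\strank(B_{\bar T})>d-\tfrac14$, so $k/\strank = 1-\Theta(1/d)$. A version valid only for $k\le c\,\strank(B)$ with $c<1$, as in the original Bourgain--Tzafriri theorem, would be useless there.
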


\begin{remark}
\label{rem:MSS}
When the rows of $B$ are unit vectors in isotropic position, $B^{\top}B = (m/d)I_d$, so $\strank(B) = \rank(B) = d$ and all $d$ singular values of $B$ equal $\sqrt{m/d}$. For $k<d$, \Cref{thm:MSS} produces $k$ rows whose submatrix has $\sigma_{\min}^2 \ge (1-\sqrt{k/d})^2$. The bound is vacuous at $k=d$, and to recover a non-trivial lower bound on $\sigma_{\min}$ in full dimension, we will apply \Cref{thm:MSS} with $k=d-1$ and append one further row via the following linear-algebraic lemma.
\end{remark}

\begin{lemma}
\label{lemma:min_singular_value}
Let $\delta, s \in (0,1]$, let $B \in \RR^{(d-1)\times d}$ satisfy $\sigma_{\min}(B) \ge s$, and let $V_B$ denote its row space. Given a unit vector $u \in \RR^d$ with $\|\proj_{V_B^\perp} u\| \ge \delta$, let $A \in \RR^{d\times d}$ be the matrix obtained by appending $u$ as an additional row to $B$. Then
\[
\sigma_{\min}(A) \;\ge\; \frac{\delta s}{4}.
\]
\end{lemma}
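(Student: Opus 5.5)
Fix a unit vector $x \in \RR^d$; we must show $\norm{Ax} \ge \delta s/4$. The idea is to split $x$ along the orthogonal decomposition $\RR^d = V_B \oplus V_B^\perp$. Since $B$ has $d-1$ rows and $\sigma_{\min}(B) \ge s > 0$, the row space $V_B$ has dimension $d-1$. Hence $V_B^\perp$ is spanned by a single unit vector $e$. Write $x = y + t e$ with $y \in V_B$ and $\norm{y}^2 + t^2 = 1$. Because $Be = 0$, we have $Bx = By$, and $\norm{By} \ge s\norm{y}$.

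For the last step we need $\sigma_{\min}(B)$ in the sense defined in the paper, namely $\min_{x\neq 0} \norm{Bx}/\norm{x}$ over $x \in \RR^d$. Literally this is $0$ for a $(d-1)\times d$ matrix, so the hypothesis must be read as the smallest of the $d-1$ singular values. Equivalently, $\norm{Bz} \ge s\norm{z}$ for all $z \in V_B$. I will state this interpretation explicitly at the start, since it is what makes the lemma meaningful.

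The last coordinate of $Ax$ is $\inp{u}{x} = \inp{u}{y} + t\inp{u}{e}$. By hypothesis $|\inp{u}{e}| = \|\proj_{V_B^\perp}u\| \ge \delta$, and $|\inp{u}{y}| \le \norm{y}$ since $u$ is a unit vector. Also $\norm{Ax}^2 = \norm{By}^2 + \inp{u}{x}^2$, so it suffices to bound one of the two terms from below. We split into two cases with threshold $\norm{y} \ge \delta/2$.

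\textbf{Case $\norm{y} \ge \delta/2$.} Then $\norm{Ax} \ge \norm{By} \ge s\delta/2 \ge \delta s/4$.

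\textbf{Case $\norm{y} < \delta/2$.} Then $t^2 = 1 - \norm{y}^2 > 1 - \delta^2/4 \ge 3/4$, using $\delta \le 1$. Hence $|t| > \sqrt{3}/2$, and
\[
|\inp{u}{x}| \ge |t|\,\delta - \norm{y} > \tfrac{\sqrt3}{2}\delta - \tfrac{\delta}{2} > \tfrac{\delta}{4} \ge \tfrac{\delta s}{4},
\]
where the final inequality uses $s \le 1$.

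In both cases $\norm{Ax} \ge \delta s/4$, which proves the lemma. The only point needing care is the interpretation of $\sigma_{\min}(B)$ noted above; after that, the remaining work is the routine balancing of the threshold between the two cases.
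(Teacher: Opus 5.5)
Your proof is correct and follows the paper's argument. Both use the same splitting of $x$ into its components in $V_B$ and in the line $V_B^\perp$, the same estimate $|\inp{u}{x}| \ge \delta|t| - \norm{y}$, and a two-case analysis on the size of the row-space component; the only difference is that the paper thresholds relative to the other component ($\norm{x_1} \ge (\delta/2)\norm{x_2}$), while your absolute threshold $\norm{y} \ge \delta/2$ is marginally simpler. Your remark that $\sigma_{\min}(B)$ must be read as the smallest of the $d-1$ singular values is also right, and the paper relies on exactly this reading implicitly when it asserts that $s>0$ forces full row rank and that $\norm{Bx_1} \ge s\norm{x_1}$.
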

 \begin{proof}
Since $s>0$, $B$ has full row rank, so $\dim V_B = d-1$ and $V_B^\perp$ is one-dimensional. Let $x \in \RR^d$ be a unit vector, and write
\[
x = x_1 + x_2, \qquad u = u_1 + u_2,
\]
with $x_1, u_1 \in V_B$ and $x_2, u_2 \in V_B^\perp$. Since $V_B^\perp$ is one-dimensional and $\norm{u_2}\ge\delta$, we have 
\[|\inp{u_2}{x_2}| = \norm{u_2}\norm{x_2} \ge \delta\norm{x_2},\] 
and hence
\[
|\inp{u}{x}| \ge \max\bigl(\delta\norm{x_2} - \norm{x_1},\, 0\bigr).
\]
Since $Bx = Bx_1$,
\begin{equation}
\label{eq:lower_Ax}
\norm{Ax}^2 = \norm{Bx_1}^2 + \inp{u}{x}^2 \ge s^2\norm{x_1}^2 + \max\bigl(\delta\norm{x_2} - \norm{x_1},\,0\bigr)^2.
\end{equation}
We distinguish two cases. If $\norm{x_1} \ge (\delta/2)\norm{x_2}$, then $1 = \norm{x_1}^2 + \norm{x_2}^2 \le (1+4/\delta^2)\norm{x_1}^2$, so $\norm{x_1} \ge \delta/\sqrt{4+\delta^2} \ge \delta/4$, and \eqref{eq:lower_Ax} gives 
\[\norm{Ax} \ge s\norm{x_1} \ge \frac{s\delta}{4}.\] 

Otherwise $\norm{x_1} < (\delta/2)\norm{x_2}$, so $\norm{x_2}\ge 1/2$, and \eqref{eq:lower_Ax} gives
\[
\norm{Ax} \ge \delta\norm{x_2} - \norm{x_1} \ge \tfrac{\delta}{2}\norm{x_2} \ge \tfrac{\delta}{4} \ge \tfrac{s\delta}{4}.
\]
Minimizing over unit $x$ yields $\sigma_{\min}(A) \ge s\delta/4$.
\end{proof}
 
\paragraph{Density points.} Given a finite set $P \subseteq \mathbb{S}^{d-1}$, we say that a point $v \in P$ is an \emph{$(\epsilon,k)$-density point} of $P$ if at least $k$ points of $P$ lie within Euclidean distance $\epsilon$ of $v$.

\begin{lemma}
\label{lemma:density_point}
For any finite $P \subseteq \mathbb{S}^{d-1}$, any $\epsilon \in (0,1]$, and any positive integer $k$, fewer than $k\,(3/\epsilon)^d$ points of $P$ fail to be $(\epsilon,k)$-density points.
\end{lemma}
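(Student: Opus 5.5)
The plan is a greedy packing argument on the sphere. Let $N \subseteq P$ denote the set of points that are not $(\epsilon,k)$-density points, so each $v \in N$ has fewer than $k$ points of $P$ (itself included) within distance $\epsilon$. We will find a subset $N' \subseteq N$ whose $\epsilon$-neighbourhoods cover $N$ and whose points are pairwise far apart, bound $|N'|$ by a volume argument, and then bound $|N|$ by $|N'|$ times the maximum number of points of $N$ in any single $\epsilon$-ball.

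\textbf{Step 1 (greedy net).} Pick points of $N$ one at a time, each time choosing a point at distance greater than $\epsilon$ from every point already chosen, until no such point remains. The resulting $N'$ has two properties:
\begin{itemize}
\item its points are pairwise more than $\epsilon$ apart;
\item every point of $N$ lies within distance $\epsilon$ of some point of $N'$.
\end{itemize}

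\textbf{Step 2 (volume bound).} Since the points of $N'$ are pairwise more than $\epsilon$ apart, the open balls of radius $\epsilon/2$ centred at them are pairwise disjoint. All these balls lie inside the ball of radius $1+\epsilon/2 \le 3/2$ about the origin, using $\epsilon \le 1$. Comparing volumes gives
\[
|N'|\,(\epsilon/2)^d \;\le\; (1+\epsilon/2)^d \;\le\; (3/2)^d,
\]
so $|N'| \le (3/\epsilon)^d$.

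\textbf{Step 3 (covering).} By the covering property, $N$ is contained in the union over $w \in N'$ of the sets $\{v \in P : \|v-w\| \le \epsilon\}$. Each $w \in N'$ is a non-density point, so each of these sets has fewer than $k$ elements. Hence
\[
|N| \;<\; k\,|N'| \;\le\; k\,(3/\epsilon)^d.
\]
The strict inequality needs $N' \neq \emptyset$; if $N$ is empty, the claim holds trivially since $k \ge 1$.

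The argument is routine and there is no real obstacle. The one point that needs care is choosing the net centres inside $N$ itself. That choice is what lets us apply the fewer-than-$k$ bound directly at each centre, rather than having to control the number of points near an arbitrary point of the sphere.
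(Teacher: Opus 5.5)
Your proposal is correct and follows essentially the same route as the paper: take a maximal $\epsilon$-separated subset of the non-density points, bound its size by $(3/\epsilon)^d$ by packing disjoint $\epsilon/2$-balls inside the ball of radius $1+\epsilon/2$, and cover the non-density points by fewer-than-$k$-point neighbourhoods of those centres. Your explicit greedy construction and your treatment of the empty case, which the paper leaves implicit, are both fine.
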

\begin{proof}
Let $T \subseteq P$ be the set of non-$(\epsilon,k)$-density points, and let $M \subseteq T$ be a maximal $\epsilon$-separated subset of $T$. Since the open balls of radius $\epsilon/2$ centered at points of $M$ are pairwise disjoint and contained in the ball of radius $1+\epsilon/2$, we have
\[
|M| \cdot \vol(B_{\epsilon/2}^d) \;\le\; \vol(B_{1+\epsilon/2}^d),
\]
which gives $|M| \le (3/\epsilon)^d$. By maximality, every point of $T$ lies within Euclidean distance $\epsilon$ of some point of $M$. Since each point of $M$ is a non-$(\epsilon,k)$-density point, fewer than $k$ points of $P$ lie within distance $\epsilon$ of it, so
\[
|T| \;<\; k\,|M| \;\le\; k\,(3/\epsilon)^d.\qedhere
\]
\end{proof}

\subsection{Proof of \texorpdfstring{\Cref{thm:main_geo}}{Hyperplane Avoidance Theorem}}
\label{subsec:proof_geom}
Set $\epsilon \coloneqq d^{-5/2}$. By \Cref{thm:forster}, we may apply an invertible linear transformation and normalize to assume without loss of generality that the points of $P = \{v_1,\ldots,v_n\}$ lie in isotropic position on $\mathbb{S}^{d-1}$. Since the transformation is invertible and the normalization is positive, neither the general position nor the avoiding behaviour of homogeneous hyperplanes is affected.

Let $B \in \RR^{n \times d}$ be the matrix with rows $v_1,\ldots,v_n$. Isotropy gives $B^{\top}B = \tfrac{n}{d} I_d$.

Set $k \coloneqq \tfrac{n}{8d^2} (3/\epsilon)^{-d}=d^{-O(d)}n$. By \Cref{lemma:density_point}, the set $T$ of indices of non-$(\epsilon,k)$-density points in $P$ satisfies 
\[ |T| < \frac{n}{8d^2}.\]

\medskip\noindent\textbf{Stable rank after discarding non-density points.}
Let $\bar{T} \coloneqq [n] \setminus T$ and write $B_{\bar{T}}$ for the submatrix of $B$ consisting of the rows indexed by $\bar{T}$. Define $E \coloneqq \sum_{i \in T} v_i v_i^{\top}$, so that
\[
B_{\bar{T}}^{\top} B_{\bar{T}} \;=\; \tfrac{n}{d}\,I_d - E.
\]
Since $E \succeq 0$ and each $v_i$ is a unit vector,
\[
\norm{E} \;\le\; \Tr(E) \;=\; |T| \;<\; \frac{n}{8d^2}.
\]
Setting $x \coloneqq d\norm{E}/n < 1/(8d)$ and using the inequality $(1-x)/(1+x) \ge 1-2x$, we obtain
\[
\strank(B_{\bar{T}})
\;=\; \frac{n - \Tr(E)}{(n/d) + \norm{E}}
\;\ge\; \frac{n - d\norm{E}}{(n/d) + \norm{E}}
\;=\; d \cdot \frac{1-x}{1+x}
\;\ge\; d(1-2x)
\;>\; d - \tfrac{1}{4}.
\]

\medskip\noindent\textbf{Applying restricted invertibility.} Since $d-1 < \strank(B_{\bar{T}})$, \Cref{thm:MSS} produces a subset $S \subseteq \bar{T}$ of size $d-1$. Let $A' \coloneqq B_S$. Since every row of $B_{\bar{T}}$ is a unit vector, $\normF{B_{\bar{T}}}^2 = |\bar{T}|$, and so
\[
\sigma_{\min}(A')^2
\;\ge\; \left(1 - \sqrt{\frac{d-1}{d - 1/4}}\right)^{2}
\;=\; \left(1 - \sqrt{1 - \frac{3/4}{d-1/4}}\right)^{2}.
\]
Applying $1 - \sqrt{1-t} \ge t/2$ for $t \in [0,1]$ gives
\[
s \;\coloneqq\; \sigma_{\min}(A') \;\ge\; \frac{3}{8(d-1/4)} \;=\; \Omega(1/d).
\]

\medskip\noindent\textbf{Appending a $d$-th density point.} Let $V'$ denote the row space of $A'$ and let $w$ be a unit vector perpendicular to $V'$. By \Cref{lemma:large_inner_product}, at least $n/(2d)$ vectors in $P$ satisfy $|\inp{v_i}{w}| \ge 1/\sqrt{2d}$. Since $|T| < n/(8d^2) \le n/(2d)$, at least one such vector lies in $\bar{T}$; call it $u_d$. In particular, $u_d$ is an $(\epsilon,k)$-density point and
\[
\norm{\proj_{V'^{\perp}}(u_d)} \;\ge\; |\inp{u_d}{w}| \;\ge\; \frac{1}{\sqrt{2d}} \;\eqqcolon\; \delta.
\]
Let $D \in \RR^{d \times d}$ be the matrix obtained by appending $u_d$ as the $d$-th row of $A'$. By \Cref{lemma:min_singular_value},
\[
\sigma_{\min}(D) \;\ge\; \frac{\delta s}{4} \;=\; \Omega(d^{-3/2}).
\]

\paragraph{Final construction.} Denote the rows of $D$ by $u_1,\ldots,u_d$. For each $i \in [d]$, define
\[
S_i \;\coloneqq\; \{v \in P \mid \norm{v-u_i} \le \epsilon\}.
\]
Since every $u_i$ is an $(\epsilon,k)$-density point, $|S_i| \ge k = n \cdot d^{-O(d)}$.

Suppose, for contradiction, that some $w \in \mathbb{S}^{d-1}$ defines a homogeneous hyperplane that does not avoid any $S_i$. Then for each $i \in [d]$, the hyperplane meets $\conv(S_i)$, and consequently, there exists $w_i \in B_\epsilon(u_i)$ with $\inp{w}{w_i}=0$. 

Let $M \in \RR^{d\times d}$ be the matrix with rows $w_1,\ldots,w_d$. Since $Mw = 0$ and $w \ne 0$, the matrix $M$ is singular. On the other hand, since $\norm{w_i-u_i} \le \epsilon$, we have 
\[
\norm{D - M}^2 \;\le\; \normF{D-M}^2 \;=\; \sum_{i=1}^{d} \norm{u_i - w_i}^2 \;\le\; d\epsilon^2 \;=\; d^{-4}.
\]
For any  $w \in \mathbb{S}^{d-1}$, we have 
\[\norm{Mw} \ge \norm{D w + (M-D)w} \ge \norm{D w} -  \norm{(M-D)w} \ge \sigma_{\min}(D)-\norm{M-D}  \ge \Omega(d^{-3/2}) - d^{-2} > 0,  \]
for $d$ sufficiently large, contradicting the singularity of $M$.

\section{Proof of the monochromatic rectangle theorem}
\label{sec:rect_proof}
 
This section is devoted to the proof of \Cref{thm:main_rect}.
After proving \cref{thm:main_rect}, we use it first to prove \cref{thm:LargeMargin} by exploiting a result of Frankl and R\"{o}dl \cite{MR871675}, and then to prove our bounds on the approximate sign-rank of the Hadamard matrix in \cref{thm:hadamard}.
We work throughout this section with partial sign matrices $A \in \{\pm 1,\star\}^{m \times n}$, where $A_{i,j} = \star$ indicates that the entry is undefined. The notions of realizability and approximate sign-rank extend to partial matrices as described in \Cref{sec:basic_properties}.

\paragraph{The separation parameter.} The following parameter is central to the proof. It quantifies, for a given matrix and dimension, how effectively a randomized hyperplane can simultaneously separate every pair of oppositely labelled entries in each row.

\begin{definition}
\label{def:beta}
Let $A \in \{\pm 1,\star\}^{m \times n}$ be a partial sign matrix and let $d \in \mathbb{N}$. Define $\beta(A,d)$ to be the supremum of $\beta \ge 0$ for which there exist vectors $\{v_j\}_{j \in [n]} \subseteq \mathbb{R}^{d}$ and finitely supported distributions $\eta_1,\ldots,\eta_m$ on $\mathbb{R}^{d}$ such that for every $i \in [m]$ and every $j_1, j_2 \in [n]$ with $A_{i,j_1} = -1$ and $A_{i,j_2} = +1$,
\[
\Pr_{w \sim \eta_i}\bigl[\sign\inp{w}{v_{j_1}} = -1 \;\text{and}\; \sign\inp{w}{v_{j_2}} = +1\bigr] \;\ge\; \beta.
\]
\end{definition}

Note that $\beta(A,d)$ is a property of the matrix $A$ and the dimension $d$ alone: the definition optimizes over all embeddings and all distributions. The approximate sign-rank assumption provides a non-trivial initial lower bound.

\begin{lemma}
\label{lem:beta_initial}
If $\signrank_\epsilon(A) \le d$, then $\beta(A,d) \ge 1 - 2\epsilon$.
\end{lemma}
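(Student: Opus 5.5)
The plan is to instantiate \Cref{def:beta} directly with the objects supplied by the dual formulation, \Cref{prop:dual_asr}. Suppose $\signrank_\epsilon(A) \le d$. Then there exist vectors $\{v_j\}_{j\in[n]} \subseteq \RR^d$ and, for each row $i$, a finitely supported distribution $\eta_i$ on $\RR^d$ with
\[
\Pr_{w\sim\eta_i}\bigl[\sign\inp{w}{v_j} \neq A_{i,j}\bigr] \le \epsilon
\]
for every $j$ with $A_{i,j}\neq\star$. We use exactly these $v_j$ and $\eta_i$ as witnesses in the definition of $\beta(A,d)$.

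Fix a row $i$ and columns $j_1,j_2$ with $A_{i,j_1}=-1$ and $A_{i,j_2}=+1$; both entries are non-$\star$, so the guarantee applies to each. The bad event is that $\sign\inp{w}{v_{j_1}}\neq -1$ or $\sign\inp{w}{v_{j_2}}\neq +1$. By a union bound its probability is at most $2\epsilon$. Hence the good event, $\sign\inp{w}{v_{j_1}}=-1$ and $\sign\inp{w}{v_{j_2}}=+1$, has probability at least $1-2\epsilon$.

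This holds uniformly over all rows and all oppositely labelled pairs. Therefore $\beta=1-2\epsilon$ is admissible, and the supremum satisfies $\beta(A,d)\ge 1-2\epsilon$.

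The only point needing care is that the definition of $\beta$ requires the same embedding $\{v_j\}$ for all rows, with only the distributions depending on $i$. The dual formulation delivers precisely this structure, so there is no real obstacle. One should also note the convention $\sign(0)\notin\{\pm1\}$: a zero inner product counts as a misclassification in the dual guarantee, so it is already included in the bad events and the union bound is unaffected.
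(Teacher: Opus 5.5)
Your proposal is correct and follows essentially the same route as the paper: take the embedding $\{v_j\}$ and row distributions $\eta_i$ from the dual formulation (\Cref{prop:dual_asr}) as witnesses for $\beta(A,d)$, then union-bound the two misclassification events for an oppositely labelled pair. Your side remarks, that the embedding must be shared across rows and that $\sign\inp{w}{v_j}=0$ already counts as an error, are accurate but not needed beyond what the paper does.
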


\begin{proof}
By the dual formulation (\Cref{prop:dual_asr}), there exist vectors $\{v_j\}_{j \in [n]} \subseteq \mathbb{R}^{d}$ and, for each $i \in [m]$, a finitely supported distribution $\eta_i$ over $\mathbb{R}^{d}$ such that
\[
\Pr_{w \sim \eta_i}\bigl[\sign\inp{w}{v_j} \neq A_{i,j}\bigr] \;\le\; \epsilon
\quad \text{for all } j \in [n] \text{ with } A_{i,j} \neq \star.
\]
Fix $i \in [m]$ and $j_1, j_2$ with $A_{i,j_1} = -1$ and $A_{i,j_2} = +1$. By a union bound,
\[
\Pr_{w \sim \eta_i}\bigl[\sign\inp{w}{v_{j_1}} \neq -1 \;\text{or}\; \sign\inp{w}{v_{j_2}} \neq +1\bigr]
\;\le\; 2\epsilon,
\]
so the complementary event has probability at least $1 - 2\epsilon$.
\end{proof}

\paragraph{The key inductive step.} The next lemma contains the central argument of the proof: given any matrix $A$ with $\beta(A,d) \ge \beta$, we either find a large monochromatic rectangle in $A$, or pass to a large submatrix where $\beta$ has strictly increased.  

\begin{lemma}[Monochromatic rectangle or $\beta$-boost]
\label{lem:monochromatic_or_boost}
Let $A \in \{\pm 1,\star\}^{m \times n}$ with $\beta(A,d) \ge \beta$ for some $\beta > 0$. Then at least one of the following holds:
\begin{enumerate}
\item[\textup{(a)}] There exist $R \subseteq [m]$ and $S \subseteq [n]$ with $|R| \ge m/(4d)$ and $|S| \ge d^{-O(d)}\, n$ such that every non-$\star$ entry of $A_{R \times S}$ has the same sign.
\item[\textup{(b)}]  There exist $R \subseteq [m]$ and $S \subseteq [n]$ with $|R| \ge m/(2d)$ and $|S| \ge d^{-O(d)}\, n$ such that
\[
\beta(A_{R \times S},\, d) \;\ge\; \frac{d}{d-1}\,\beta.
\]
\end{enumerate}
\end{lemma}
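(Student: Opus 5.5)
Fix $\beta' < \beta$ arbitrarily close to $\beta$ (so that we can work with an actual witness rather than the supremum), and take vectors $\{v_j\}_{j\in[n]}\subseteq\RR^d$ and finitely supported distributions $\eta_1,\ldots,\eta_m$ witnessing $\beta(A,d)\ge\beta'$. We first perturb the $v_j$ slightly so that they are in general position. Since each $\eta_i$ is finitely supported and no $\inp{w}{v_j}$ with $w$ in a support is zero (which we may arrange by first perturbing the $w$'s), a small enough perturbation preserves every sign $\sign\inp{w}{v_j}$ for $w$ in the supports. Hence the witness remains valid. If several columns share a vector we may also separate them. We then apply \Cref{thm:main_geo} to $P=\{v_j\}$ and obtain $S_1,\ldots,S_d\subseteq[n]$ of size $d^{-O(d)}n$ such that every homogeneous hyperplane avoids some $S_r$.

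Next we classify the rows. Call row $i$ \emph{good} if for some $r\in[d]$ the non-$\star$ entries of $A_{\{i\}\times S_r}$ all have the same sign. If at least $m/2$ rows are good, we pigeonhole over the $d$ choices of $r$ and the $2$ choices of sign (a row whose sub-row is entirely $\star$ can be counted toward either sign). This gives at least $m/(4d)$ rows $R$ sharing the same $r$ and sign, and $A_{R\times S_r}$ has all non-$\star$ entries equal, which is case (a).

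Otherwise at least $m/2$ rows are bad, meaning each $S_r$ contains both a $-1$ and a $+1$ entry of that row. For a bad row $i$, every $w$ in the support of $\eta_i$ defines a hyperplane avoiding some $S_r$. Hence some $r=r(i)$ satisfies $\Pr_{w\sim\eta_i}[w\text{ avoids }S_{r}]\ge 1/d$. We pigeonhole on $r(i)$ to get a set $R$ of at least $m/(2d)$ bad rows with a common $r$, and set $S=S_r$.

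For $i\in R$ we define $\eta_i'$ to be $\eta_i$ conditioned on the event $E_i$ that $w$ does \emph{not} avoid $S_r$. The event $E_i$ has probability at most $1-1/d$, and it is positive because row $i$ is bad and $\beta'>0$. The crucial observation is this. If $j_1,j_2\in S_r$ with $A_{i,j_1}=-1$ and $A_{i,j_2}=+1$, then the separating event $\{\sign\inp{w}{v_{j_1}}=-1,\ \sign\inp{w}{v_{j_2}}=+1\}$ forces the hyperplane to meet $\conv(S_r)$. In other words, the separating event is contained in $E_i$. Therefore
\[
\Pr_{\eta_i'}[\text{separate}]=\frac{\Pr_{\eta_i}[\text{separate}]}{\Pr_{\eta_i}[E_i]}\ge \frac{\beta'}{1-1/d}=\frac{d}{d-1}\beta' .
\]
The same vectors $v_j$ for $j\in S$, together with the distributions $\eta_i'$, witness $\beta(A_{R\times S},d)\ge\frac{d}{d-1}\beta'$. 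Letting $\beta'\to\beta$, and noting that $R$ and $S$ can be chosen from finitely many options and so stabilize along a sequence, we obtain (b).

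The main obstacle is making sure the avoidance property from \Cref{thm:main_geo} applies to the supports of the $\eta_i$. This is exactly the general-position perturbation in the first step, and it relies on the supports being finite and on strict signs being stable under small perturbations. Everything else is pigeonholing and conditioning.
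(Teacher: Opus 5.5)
Your proposal is correct and follows the paper's proof essentially step for step. You apply \Cref{thm:main_geo} to the column vectors and pigeonhole the rows having a monochromatic sub-row on some $S_r$ to get (a). For the remaining rows you pick $r$ with $\Pr_{w\sim\eta_i}[w \text{ avoids } S_r]\ge 1/d$, pigeonhole to $m/(2d)$ rows, and condition $\eta_i$ on not avoiding $S_r$, an event containing every separating event. Your extra steps are points the paper passes over silently, and you handle them soundly: working with $\beta'<\beta$ to get an actual witness, perturbing the $v_j$ into general position so that \Cref{thm:main_geo} applies, and checking that the conditioning event has positive probability.
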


\begin{proof}
Let $\{v_j\}_{j \in [n]} \subseteq \mathbb{S}^{d-1}$ and $\{\eta_i\}_{i \in [m]}$ be an embedding and distributions witnessing $\beta(A,d) \ge \beta$. Apply \Cref{thm:main_geo} to the point set $\{v_j\}_{j \in [n]}$, obtaining index sets $S_1,\ldots,S_d \subseteq [n]$ such that
\begin{enumerate}
\item $|S_r| \ge d^{-O(d)}\, n$ for every $r \in [d]$, and
\item every homogeneous hyperplane avoids at least one of $\{v_j\}_{j \in S_1},\ldots,\{v_j\}_{j \in S_d}$.
\end{enumerate}

Call a row $i \in [m]$ \emph{easy} if all non-$\star$ entries of $A_{\{i\} \times S_r}$ share the same sign for some $r \in [d]$, and \emph{hard} otherwise.

\paragraph{Case 1: at least $m/2$ rows are easy.}
By pigeonhole over $r \in [d]$ and $b \in \{\pm 1\}$, at least $m/(4d)$ easy rows share the same part $S_r$ and the same sign $b$. These rows and $S_r$ form a monochromatic rectangle, giving~(a).

\paragraph{Case 2: at least $m/2$ rows are hard.}
Every hard row $i$ has the property that $A_{\{i\} \times S_r}$ contains both $+1$ and $-1$ entries for every $r \in [d]$. By the avoidance property, every $w \in \mathbb{S}^{d-1}$ avoids at least one of $\{v_j\}_{j \in S_1},\ldots,\{v_j\}_{j \in S_d}$. Averaging over $w \sim \eta_i$,
\[
\sum_{r=1}^{d} \Pr_{w \sim \eta_i}[w \text{ avoids } \{v_j\}_{j \in S_r}] \;\ge\; 1,
\]
so for each hard row $i$ there exists $r_i \in [d]$ with $\Pr_{w \sim \eta_i}[w \text{ avoids } \{v_j\}_{j \in S_{r_i}}] \ge 1/d$. By pigeonhole over $r \in [d]$, at least $m/(2d)$ hard rows share the same index $r$; call this set $R$.

It remains to show $\beta(A_{R \times S_r}, d) \ge \frac{d}{d-1}\,\beta$. Fix $i \in R$ and $j_1, j_2 \in S_r$ with $A_{i,j_1} = -1$ and $A_{i,j_2} = +1$. Since $v_{j_1}$ and $v_{j_2}$ both belong to $\{v_j\}_{j \in S_r}$, any $w$ achieving $\sign\inp{w}{v_{j_1}} = -1$ and $\sign\inp{w}{v_{j_2}} = +1$ necessarily fails to avoid $\{v_j\}_{j \in S_r}$. Letting 
\[E_r \coloneqq \{w \mid w \text{ does not avoid } \{v_j\}_{j \in S_r}\},\] 
our choice of $r$ gives $\Pr_{w \sim \eta_i}[E_r] \le (d-1)/d$. Conditioning on $E_r$,
\[
\Pr_{w \sim \eta_i}\bigl[\sign\inp{w}{v_{j_1}} = -1 \;\text{and}\; \sign\inp{w}{v_{j_2}} = +1 \;\big|\; E_r\bigr]
\;\ge\; \frac{\beta}{(d-1)/d}
\;=\; \frac{d}{d-1}\,\beta.
\]
Since this holds for every $i \in R$ and every oppositely labelled pair in $S_r$, the restricted embedding $\{v_j\}_{j \in S_r}$ together with the conditional distributions $\{\eta_i(\cdot \mid E_r)\}_{i \in R}$ witness $\beta(A_{R \times S_r}, d) \ge \frac{d}{d-1}\,\beta$, giving~(b).
\end{proof}

\paragraph{Iterating the inductive step.} We iterate \Cref{lem:monochromatic_or_boost}. Set $A^{(0)} \coloneqq A$ and $\beta^{(0)} \coloneqq 1 - 2\epsilon$, which is a valid initial bound by \Cref{lem:beta_initial}. At step $t$, we apply \Cref{lem:monochromatic_or_boost} to $A^{(t)}$ with $\beta = \beta^{(t)}$. If a monochromatic rectangle is found, we terminate. Otherwise, we pass to a submatrix $A^{(t+1)}$ with
\[
\beta^{(t+1)} \;\ge\; \frac{d}{d-1}\,\beta^{(t)},
\]
while the number of rows decreases by a factor of at most $4d$ and the number of columns decreases by a factor of at most $d^{O(d)}$.

Since $\beta \le 1$, boosting $\beta$ can occur at most $T$ times, where $T=O\!\left(d\log\frac{1}{1-2\epsilon}\right)$ is the smallest integer with $(d/(d-1))^T(1-2\epsilon) > 1$. 

Therefore, a monochromatic rectangle must be found at some step $t < T$. Tracking the losses across all iterations, the final rectangle $A_{R \times S}$ satisfies
\[
|R| \;\ge\; (4d)^{-T}\, m \;=\; d^{-O(d\log(1/(1-2\epsilon)))}\, m
\]
and
\[
|S| \;\ge\; d^{-O(d) \cdot T}\, n \;=\; d^{-O(d^2\log(1/(1-2\epsilon)))}\, n,
\]
as claimed. 

\paragraph{Approximate sign-rank of large-margin halfspaces.}
We can combine \cref{thm:main_rect} with the following known result, stating that sufficiently large submatrices of $\mathcal{Q}^n_\gamma$ are non-monochromatic, to prove \cref{thm:LargeMargin}:
\begin{theorem}[Corollary of Theorem 1.5 in~\cite{MR871675}]
\label{thm:FR}
For every $\gamma \in (0,1)$, there exists $\delta=\delta(\gamma)>0$ such that the following holds. For every $S,T \subseteq \{\pm 1\}^d$ with
\[
|S|\,|T| > (4-\delta)^d,
\]
there exist $(x,y),(x',y') \in S \times T$ such that
\[
\langle x,y \rangle > \gamma d
\qquad\text{and}\qquad
\langle x',y' \rangle < -\gamma d.
\]
\end{theorem}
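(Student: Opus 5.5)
The plan is to derive \Cref{thm:FR} from the Frankl--R\"odl forbidden-distance theorem by translating inner products on the cube into Hamming distances. For $x,y\in\{\pm 1\}^d$ write $\mathrm{dist}(x,y)$ for their Hamming distance. Then
\[
\inp{x}{y} \;=\; d - 2\,\mathrm{dist}(x,y).
\]
So $\inp{x}{y}>\gamma d$ holds iff $\mathrm{dist}(x,y)<\tfrac{1-\gamma}{2}d$, and $\inp{x}{y}<-\gamma d$ holds iff $\mathrm{dist}(x,y)>\tfrac{1+\gamma}{2}d$. It therefore suffices to find one pair in $S\times T$ at some prescribed distance $\ell_-$ in the first window and another pair at some prescribed distance $\ell_+$ in the second window.

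I would invoke Theorem~1.5 of~\cite{MR871675} in the following (cross-family, forbidden distance) form. For every $\eta>0$ there is $\delta_0(\eta)>0$ such that the following holds whenever $\eta d<\ell<(1-\eta)d$ and $\ell$ has the appropriate parity. If $\mathcal F,\mathcal G\subseteq\{\pm1\}^d$ contain no pair at Hamming distance exactly $\ell$, then $|\mathcal F|\,|\mathcal G|\le (4-\delta_0)^d$. Equivalently, in set language, it forbids the intersection size $|F\cap G^c|$ being a fixed value.

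Fix $\eta\coloneqq (1-\gamma)/8$. Pick integers $\ell_-\in\bigl(\eta d,\tfrac{1-\gamma}{2}d\bigr)$ and $\ell_+\in\bigl(\tfrac{1+\gamma}{2}d,(1-\eta)d\bigr)$. Both windows have length $\Omega_\gamma(d)$, so suitable integers exist once $d\ge d_0(\gamma)$. For $d<d_0(\gamma)$, choose $\delta$ small enough that $(4-\delta)^d\ge 4^d-1$. Then the hypothesis forces $S=T=\{\pm1\}^d$, and one can take $y=x$ and $y'=-x$.

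The parity bookkeeping comes next. The parity of $\mathrm{dist}(x,y)$ equals the parity of the combined number of $-1$ coordinates in $x$ and $y$. I would therefore pass to $S'\subseteq S$ and $T'\subseteq T$ of fixed parity classes with $|S'|\,|T'|\ge |S|\,|T|/4$. This fixes the parity of every distance between $S'$ and $T'$, and I would choose $\ell_\pm$ of that parity, which is still possible inside windows of linear length.

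Finally set $\delta\coloneqq\delta_0(\eta)/2$. For large $d$,
\[
|S'|\,|T'|\;\ge\;\tfrac14(4-\delta)^d\;>\;(4-\delta_0)^d .
\]
Applying the cited theorem twice, once with $\ell=\ell_-$ and once with $\ell=\ell_+$, produces the two required pairs.

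The main obstacle is matching the exact hypotheses of Frankl--R\"odl's Theorem~1.5. Their statement is phrased for families of subsets with a forbidden intersection size (or distance) and carries parity and range conditions. I would need to check that the complementation $G\mapsto G^c$, which turns intersection sizes into distances, lands in the permitted range of $\ell$ for all $\gamma\in(0,1)$, and that the parity restriction does not exclude both windows. If the cited theorem only covers $\ell$ near $d/2$ or only even $\ell$, I would instead adjust which window uses complements. That is, I would apply the theorem to $(S,-T)$, using
\[
\inp{x}{-y}\;=\;-\inp{x}{y},
\]
so that each of the two required pairs reduces to the same single-window statement.
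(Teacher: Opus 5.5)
Your reduction is sound, and it is the natural way to unpack the paper's one-line attribution to Frankl--R\"odl; the paper gives no further argument. The following steps are all correct:
\begin{itemize}
\item translating via $\inp{x}{y}=d-2\,\mathrm{dist}(x,y)$;
\item picking one admissible distance in each window;
\item handling small $d$ by forcing $S=T=\{\pm1\}^d$;
\item the negation trick $T\mapsto -T$.
\end{itemize}
Two things in your black box need repair, however.

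First, a cross-family forbidden-distance bound in which only $\ell$ carries a parity condition is false. For $\ell$ even, take $\mathcal F$ and $\mathcal G$ to be the even- and odd-weight classes. For $\ell$ odd, take both to be the even class. In either case no cross pair is at distance $\ell$, yet $|\mathcal F|\,|\mathcal G|=4^{d-1}>(4-\delta_0)^d$ for large $d$. The statement you can legitimately invoke must assume that each family is parity-homogeneous and that $\ell$ has the induced cross-distance parity. Your passage to $S',T'$ provides exactly this, so the main line survives, but the cited theorem must be quoted in that form.

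Second, complementation does not turn intersection sizes into distances for non-uniform families, since $|F\triangle G|=|F|+|G|-2|F\cap G|$. So your ``equivalently, $|F\cap G^c|$'' is wrong. If the cited result is a forbidden-intersection statement for arbitrary families, you must first pigeonhole onto a single pair of weights $(k,k')$. This loses a factor $(d+1)^2$, which is absorbed by shrinking $\delta$, and the size hypothesis then forces $k,k'=(\tfrac12\pm o(1))d$ as $\delta\to0$. After that, distance $\ell$ corresponds to intersection size $(k+k'-\ell)/2$. Your $(S,-T)$ trick then means you only ever need intersection sizes between $\Omega_\gamma(d)$ and roughly $\tfrac{1-\gamma}{4}d$.

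The statement also only asks for pairs beyond a threshold, not at an exact distance, so Frankl--R\"odl is more than is needed. Since $|T|\le 2^d$, we have $|S|>(2-\delta/2)^d$. Let $t$ be the largest integer below $\tfrac{1-\gamma}{2}d$. Harper's vertex-isoperimetric inequality shows that the set of points within distance $t$ of $S$ misses at most $2^{d\,h(\gamma/2+c)}$ points of the cube, where $c\to0$ as $\delta\to0$ and $d\to\infty$. For small $\delta$ this is less than $|T|>(2-\delta/2)^d$, which gives a pair with $\inp{x}{y}>\gamma d$. Then $T\mapsto -T$ gives the other pair, with small $d$ handled as you did and no parity bookkeeping at all.
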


\begin{proof}[Proof of \cref{thm:LargeMargin}]
   Let $k = \signrank_\epsilon(\cQ^d_\gamma)$.
   Applying \cref{thm:main_rect} we obtain that $\cQ^d_\gamma$ has a monochromatic submatrix indexed by $R \times S$ with \[|R||S| \geq k^{-C_\epsilon k} \cdot k^{-C_\epsilon k^2} 4^k = k^{-C_\epsilon(k + k^2)} \cdot 4^d.\]
   On the other hand, by the theorem above, we must have $|R||S| < (4-\delta)^d$. Combining the two inequalities, letting $\alpha = 4/(4-\delta)$, and rearranging yields \[\alpha^d \leq k^{C_\epsilon(k+k^2)} \leq k^{C_\epsilon(2k^2)}.\]
   Taking logs yields $k^2 \log k = \Omega_{\epsilon, \gamma}(d)$, and hence $k = \Omega_{\epsilon, \gamma}(\sqrt{d/\log d})$ by a routine calculation.
\end{proof}

\paragraph{Approximate sign-rank of the Hadamard matrix.}
We now prove \Cref{thm:hadamard}. The lower bound follows from our spectral corollary to our monochromatic rectangle theorem. The upper bound adapts the argument of Alman and Williams~\cite{AlmanW17} for the probabilistic sign-rank of $H_m$. 
The key observation to extend the argument of \cite{AlmanW17} to approximate sign-rank is that in the resulting rank factorization, the column vectors $v_y$ can be chosen independently of the distribution over hyperplanes.

\begin{proof}[Proof of \Cref{thm:hadamard}]
The lower bound is immediate from \Cref{thm:VC_less_asgnrank} and the fact that $\VCdim(H_m)=m$: the columns indexed by the standard basis vectors $e_1,\ldots,e_m \in \{0,1\}^m$ are shattered, since $H_m(x,e_i) = (-1)^{x_i}$, which ranges over all sign patterns as $x$ varies over $\{0,1\}^m$.

For the upper bound, recall that the Hadamard matrix $H_m$ is defined by $H_m(x,y) = (-1)^{\sum_i x_i y_i}$ for $x,y \in \{0,1\}^m$, so that 
\[H_m(x,y) = \Parity(x_1 y_1, \ldots, x_m y_m),\] where $\Parity(z) \coloneqq (-1)^{|z|}$ is the sign-valued parity function.

Alman and Williams~\cite{10.1109/FOCS.2015.18} showed that for every symmetric function $f\colon \{0,1\}^m \to \mathbb{Z}$, there exists a distribution $\mu$ over multilinear polynomials of degree $d \coloneqq O(\sqrt{m}\log(1/\epsilon))$ with integer coefficients such that
\[
\Pr_{p \sim \mu}[f(z) \neq p(z)] \;\le\; \epsilon
\qquad \text{for all } z \in \{0,1\}^m.
\]
Applying this to $\Parity$, we obtain a distribution $\mu$ over degree-$d$ polynomials with
\[
\Pr_{p \sim \mu}[H_m(x,y) \neq p(x_1 y_1, \ldots, x_m y_m)] \;\le\; \epsilon
\qquad \text{for all } x,y \in \{0,1\}^m.
\]

We now express each polynomial evaluation as an inner product. Let $\cT \coloneqq \{S \subseteq [m] : |S| \le d\}$. Writing $p(z) = \sum_{S \in \cT} a_S \prod_{i \in S} z_i$, we have
\[
p(x_1 y_1, \ldots, x_m y_m)
\;=\; \sum_{S \in \cT} a_S \prod_{i \in S} x_i \prod_{i \in S} y_i
\;=\; \inp{u_{x,p}}{v_y},
\]
where $v_y \coloneqq \bigl[\prod_{i \in S} y_i\bigr]_{S \in \cT} \in \RR^{|\cT|}$ and $u_{x,p} \coloneqq \bigl[a_S \prod_{i \in S} x_i\bigr]_{S \in \cT} \in \RR^{|\cT|}$. Crucially, $v_y$ does not depend on the polynomial $p$.

For each $x \in \{0,1\}^m$, let $\eta_x$ be the distribution over $\RR^{|\cT|}$ induced by sampling $p \sim \mu$ and forming $u_{x,p}$. Then for every $x,y \in \{0,1\}^m$,
\[
\Pr_{u \sim \eta_x}\bigl[\sign\inp{u}{v_y} \neq H_m(x,y)\bigr] \;\le\; \epsilon.
\]
By the dual formulation (\Cref{prop:dual_asr}),
\[
\signrank_\epsilon(H_m) \;\le\; |\cT| \;=\; \binom{m}{\le d} \;\le\; m^{O(\sqrt{m}\log(1/\epsilon))}.\qedhere
\]
\end{proof}

\section{VC dimension bounds for approximate sign-rank}
 \label{sec:VC_proof}
 In this section, we present the proofs of \Cref{thm:VC_less_asgnrank} and \Cref{thm:alpha_signrank_separation}.

\subsection{Proof of \texorpdfstring{\Cref{thm:VC_less_asgnrank}}{VC-Dimension Bound}} 
\label{sec:VC_less_asgnrank}
 \begin{proof}[Proof of \Cref{thm:VC_less_asgnrank}] 
 Let $\VCdim(A) = n$. Then $A$ contains a $2^n \times n$ submatrix $B$ whose rows are all $2^n$ distinct sign vectors in $\{\pm 1\}^n$. Since $\signrank_\epsilon(B) \le \signrank_\epsilon(A)$, it suffices to show
\[
d \;\coloneqq\; \signrank_\epsilon(B) \;=\; \Omega_\epsilon(n).
\]
If $d \ge n$, there is nothing to prove, so assume $d < n$.

Since $\signrank_\epsilon(B) = d$, there exist vectors $\{v_j\}_{j \in [n]} \subseteq \RR^d$ such that for every $i \in [2^n]$, there exists $u_i \in \RR^d$ satisfying
\begin{equation}
\label{eq:approx_sign_rank}
\bigl|\{j \in [n] : \sign\inp{u_i}{v_j} \neq B_{i,j}\}\bigr| \;\le\; \epsilon n.
\end{equation}
(This is the definition of approximate sign-rank applied to the uniform distribution over $[n]$.)

We use \eqref{eq:approx_sign_rank} to upper-bound the number of distinct rows of $B$. Since the VC dimension of half-spaces in $\RR^d$ is $d$, the Sauer--Shelah lemma gives at most $\binom{n}{\le d}$ distinct functions of the form $j \mapsto \sign\inp{u}{v_j}$ as $u$ ranges over $\RR^d$. For each such function, \eqref{eq:approx_sign_rank} allows at most $\binom{n}{\le \epsilon n}$ choices for the error locations. The number of distinct rows of $B$ is therefore at most $\binom{n}{\le d}\binom{n}{\le \epsilon n}$. Since $B$ has $2^n$ rows,
\[
\binom{n}{\le d}\binom{n}{\le \epsilon n} \;\ge\; 2^n.
\]
Using the standard bound $\binom{n}{\le k} \le 2^{n\,h(k/n)}$ for $k \le n/2$ (see, e.g.,~\cite[Lemma~16.19]{Flum2006ParameterizedCT}), where $h$ denotes the binary entropy function, this gives
\[
h(d/n) + h(\epsilon) \;\ge\; 1.
\]
Since $\epsilon < 1/2$, we have $h(\epsilon) < 1$, and therefore $h(d/n) \ge 1 - h(\epsilon) = \Omega_\epsilon(1)$, which implies $d = \Omega_\epsilon(n)$.
\end{proof}

\subsection{Proof of \texorpdfstring{\Cref{thm:alpha_signrank_separation}}{Approximate Sign-Rank Separation}}

The following lemma is due to Alon, Moran, and Yehudayoff~\cite[Lemma~22]{DBLP:conf/colt/AlonMY16}.

\begin{lemma} 
\label{lem:signrank_counting}
The number of $n \times n$ sign matrices with sign-rank at most $d$ is at most $2^{O(n d\log(n))}$.
\end{lemma}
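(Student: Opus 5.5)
The plan is the classical approach via Warren's theorem on sign patterns of polynomials. Throughout, ``sign-rank'' uses the convention of the paper, $\sign(0)$ being either disallowed or fixed; by a small perturbation we may assume all entries $\inp{u_i}{v_j}$ are nonzero. This perturbation is legitimate because the sign pattern of a finite set of nonzero reals is stable under small perturbations. So any sign matrix $A$ of sign-rank at most $d$ is the sign pattern of a matrix $UV^\top$ with $U,V \in \RR^{n\times d}$ and no zero entries.

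Next we view the $2nd$ entries of $U$ and $V$ as real variables $x \in \RR^{2nd}$. Each entry $(UV^\top)_{i,j} = \sum_{\ell=1}^d U_{i,\ell}V_{j,\ell}$ is then a polynomial $p_{i,j}(x)$ of degree $2$. The number of $n\times n$ sign matrices with sign-rank at most $d$ is therefore bounded by the number of distinct sign vectors $(\sign p_{i,j}(x))_{i,j}\in\{\pm1\}^{n^2}$ realized by points $x\in\RR^{2nd}$ where no $p_{i,j}$ vanishes.

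We now invoke Warren's theorem. For $N$ real polynomials of degree at most $D$ in $k$ variables with $N\ge k$, the number of sign patterns they realize at points where none vanishes is at most $(4eDN/k)^k$. Here $N=n^2$, $D=2$ and $k=2nd$. Assuming $d\le n$ (otherwise the trivial bound $2^{n^2}\le 2^{nd}$ already suffices), this gives at most
\[
\left(\frac{8e\,n^2}{2nd}\right)^{2nd} \;=\; \left(\frac{4en}{d}\right)^{2nd} \;\le\; 2^{O(nd\log n)}
\]
sign matrices.

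The only real obstacle is citing Warren's theorem correctly, specifically the version that counts strict sign patterns, i.e.\ connected components of the complement of $\bigcup_{i,j}\{p_{i,j}=0\}$. Since we have reduced to nonvanishing entries, this is exactly the version we need, and no Milnor–Thom-type refinement for patterns that include zeros is required. Everything else is bookkeeping.
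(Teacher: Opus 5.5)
Your argument is correct and is the standard proof behind this lemma. The paper does not prove it but cites Alon--Moran--Yehudayoff, whose argument (going back to Alon, Frankl and R\"odl) is exactly this application of Warren's bound to the $n^2$ degree-$2$ polynomials $\sum_{\ell} U_{i,\ell}V_{j,\ell}$ in $2nd$ variables, giving $(4en/d)^{2nd}$.

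Two small fixes. First, Warren's hypothesis $N \ge k$ reads $n \ge 2d$, not $n \ge d$, so split at $d \le n/2$ and cover $d > n/2$ by the trivial bound $2^{n^2} \le 2^{2nd}$. Second, under the paper's definition $\sign(B_{i,j}) = A_{i,j} \in \{\pm 1\}$ already forces nonzero entries, so the perturbation step is unnecessary. That is just as well: under a convention such as $\sign(0)=+1$, a perturbation within rank $d$ can fail, although adding $\delta$ times the all-ones matrix repairs it at the cost of one extra dimension.
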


 The following is implicit in~\cite{DBLP:conf/colt/AlonMY16}, but we include the proof for completeness. 

\begin{lemma} 
\label{lem:VC2_count}
The number of $n \times n$ sign matrices with $\VCdim \le 2$ is at least $2^{\Omega(n^{3/2})}$.
\end{lemma}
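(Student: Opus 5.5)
The plan is to exhibit many matrices of VC dimension at most $2$ by using incidence matrices of point--line configurations in a finite projective plane and taking all their subconfigurations. Let $q$ be a prime power and consider the projective plane $\PG(2,q)$, which has $N = q^2+q+1$ points and $N$ lines. Each line contains $q+1$ points, so there are $N(q+1) = \Theta(N^{3/2})$ incidences. Two distinct lines meet in exactly one point.

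For every subset $F$ of the incidence pairs, define the $N \times N$ sign matrix $A_F$ with rows indexed by lines and columns indexed by points. Set $(A_F)_{\ell,p} = +1$ if $(p,\ell) \in F$, and $-1$ otherwise. Distinct $F$ give distinct matrices, so this yields $2^{N(q+1)} = 2^{\Omega(N^{3/2})}$ matrices. To pass from $N$ to general $n$, choose $q$ prime with $N \le n$ and $N = \Theta(n)$ (Bertrand's postulate). Then pad to $n \times n$ by repeating an existing row and column; duplicating rows or columns does not change the VC dimension.

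The main step is to check that $\VCdim(A_F) \le 2$. Suppose a set of three points $\{p_1,p_2,p_3\}$ were shattered. Some row would have to be $+1$ on $p_1,p_2$, and another row would have to be $+1$ on $p_1,p_3$. A row that is $+1$ on two points is a line containing both of them, because $+1$ entries occur only at incidences. Only one line passes through $p_1$ and $p_2$, and the same holds for $p_1$ and $p_3$. There are two cases.
\begin{itemize}
\item If $p_1,p_2,p_3$ are collinear, the line through them is the unique row that is $+1$ on $\{p_1,p_2\}$. To realize the pattern $(+,+,-)$, this row must be $-1$ on $p_3$. To realize $(+,+,+)$, a row must be $+1$ on all three, and that row is again the same line, which is a contradiction.
\item If $p_1,p_2,p_3$ are not collinear, no line contains all three, so the pattern $(+,+,+)$ cannot be realized at all.
\end{itemize}
Either way the three points are not shattered. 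Duplicated columns cannot lie in a shattered set, since two equal columns cannot receive different signs. Hence $\VCdim \le 2$ after padding as well.

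Combining these, we get $2^{\Omega(n^{3/2})}$ distinct $n \times n$ sign matrices with VC dimension at most $2$. The only mildly delicate point is this shattering case analysis. The rest is counting and choosing a prime near $\sqrt{n}$.
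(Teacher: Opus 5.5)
Your proposal is correct and follows essentially the same route as the paper: take the $\pm 1$ incidence matrix of a projective plane, turn arbitrary subsets of its $+1$ entries into $-1$, and use that two points determine a unique line. Equivalently, no matrix in the family has a $2\times 2$ all-$(+1)$ submatrix, which the paper notes any matrix with $\VCdim\ge 3$ must contain, since realizing $(+,+,+)$ and $(+,+,-)$ requires two distinct rows that are $+1$ on the same two columns. Your shattering case analysis is an unpacked form of that criterion, and your Bertrand-plus-padding step handles general $n$, which the paper leaves implicit by taking $n=m^2+m+1$.
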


\begin{proof}
Let $n = m^2 + m + 1$, and let $\PG(2,m)$ be the projective plane of order $m$, which contains $n$ points and $n$ lines with each line incident to exactly $m+1$ points. Define the $n \times n$ sign matrix $A$ by setting $A_{l,p} = +1$ if point $p$ lies on line $l$, and $A_{l,p} = -1$ otherwise. The matrix $A$ contains exactly $n(m+1) = \Theta(n^{3/2})$ entries equal to $+1$.

Since any two distinct lines of $\PG(2,m)$ meet in exactly one point, $A$ contains no $2 \times 2$ all-$1$'s submatrix. This property is preserved if we change any subset of the $+1$ entries to $-1$, so we obtain $2^{\Theta(n^{3/2})}$ distinct sign matrices, none containing a $2 \times 2$ all-$1$'s submatrix. Since any matrix with $\VCdim \ge 3$ must contain such a submatrix, each of these matrices has VC dimension at most $2$.
\end{proof}

\paragraph{Derandomization of approximate signrank.}  The following lemma is  analogous to Adleman's theorem~\cite{MR539832} in complexity theory. 

\begin{lemma}
\label{lem:Adleman}
Let $A \in \{\pm 1\}^{m \times n}$ be a sign matrix and let $\epsilon \in (0,1/2)$. If $\signrank_\epsilon(A) \le d$, then there exist vectors $\{v_j \in \RR^d\}_{j \in [n]}$ and an odd integer $k = O_\epsilon(\log n)$ such that for every $i \in [m]$, there exist $u_{i,1},\ldots, u_{i,k} \in \RR^d$ with
\[
A_{i,j} \;=\; \Maj\bigl(\sign\inp{u_{i,1}}{v_j},\;\ldots,\;\sign\inp{u_{i,k}}{v_j}\bigr)
\quad \text{for all } j \in [n],
\]
where $\Maj\colon \{\pm 1\}^k \to \{\pm 1\}$ denotes the majority function.
\end{lemma}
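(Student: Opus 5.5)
The plan is to derandomize the dual formulation by sampling, in the spirit of Adleman's argument, followed by a Chernoff bound and a union bound over columns. By \Cref{prop:dual_asr}, the hypothesis $\signrank_\epsilon(A) \le d$ gives vectors $\{v_j\}_{j \in [n]} \subseteq \RR^d$ and, for each row $i \in [m]$, a finitely supported distribution $\eta_i$ on $\RR^d$ with
\[
\Pr_{u \sim \eta_i}\bigl[\sign\inp{u}{v_j} \neq A_{i,j}\bigr] \le \epsilon \quad\text{for all } j \in [n].
\]
These $v_j$ will be the vectors in the conclusion. The key point is that they do not depend on $i$, so only the row side needs to be derandomized, and each row can be handled separately.

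Fix a row $i$. Draw $u_{i,1},\ldots,u_{i,k}$ independently from $\eta_i$, where $k$ is an odd integer to be chosen. For a fixed column $j$, the indicators $\1\bigl(\sign\inp{u_{i,\ell}}{v_j} \neq A_{i,j}\bigr)$ are i.i.d.\ Bernoulli variables with mean at most $\epsilon < 1/2$. The majority vote is wrong at $j$ only if at least $k/2$ of these indicators equal $1$. By Hoeffding's inequality this happens with probability at most
\[
\exp\bigl(-2k(1/2-\epsilon)^2\bigr).
\]
A union bound over the $n$ columns shows that the probability that some column $j$ is misclassified is at most $n\exp(-2k(1/2-\epsilon)^2)$.

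Choose $k$ to be the smallest odd integer with $k > \ln n/(2(1/2-\epsilon)^2)$. This gives $k = O_\epsilon(\log n)$ and makes the failure probability strictly less than $1$. Hence some realization $u_{i,1},\ldots,u_{i,k}$ computes row $i$ exactly by majority vote, and we fix it. Doing this independently for every $i \in [m]$ proves the lemma. The value of $k$ depends only on $n$ and $\epsilon$, not on $m$, because no union bound over rows is needed: the existence argument is carried out row by row.

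There is no real obstacle here. The only points to check are that $\epsilon < 1/2$ is strictly required, so that the Hoeffding exponent is positive, and that $k$ is odd, so that the majority has no ties. One subtle point is that $\sign(0)$ could occur for some sampled $u$. This causes no trouble, because any such outcome is simply counted among the errors, whose probability is already at most $\epsilon$.
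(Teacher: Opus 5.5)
Your proposal is correct and follows the paper's proof essentially verbatim: the dual formulation from \Cref{prop:dual_asr}, independent sampling of $k$ vectors from $\eta_i$ row by row, Hoeffding's bound $e^{-2k(1/2-\epsilon)^2}$ on the majority failing at a single column, and a union bound over the $n$ columns with $k = O_\epsilon(\log n)$. Your explicit choice of $k$ as the least odd integer exceeding $\ln n/(2(1/2-\epsilon)^2)$, and your remarks on ties and on $\sign(0)$, only make precise details that the paper leaves implicit.
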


\begin{proof}
By the dual formulation (\Cref{prop:dual_asr}), there exist vectors $\{v_j \in \RR^d\}_{j \in [n]}$ such that for every $i \in [m]$, there is a finitely supported distribution $\eta_i$ over $\RR^d$ with
\[
\Pr_{u \sim \eta_i}\bigl[\sign\inp{u}{v_j} \neq A_{i,j}\bigr] \;\le\; \epsilon
\quad \text{for all } j \in [n].
\]
Fix $i \in [m]$ and draw $u_{i,1},\ldots,u_{i,k}$ independently from $\eta_i$. For a fixed $j \in [n]$, let 
\[X \; \coloneqq\; \sum_{l=1}^{k} \1[\sign\inp{u_{i,l}}{v_j} \neq A_{i,j}]\]
count the number of misclassifications. Since $\Ex[X] \le k\epsilon$, the majority vote errs on $A_{i,j}$ only if $X \ge k/2$. By Hoeffding's inequality,
\[
\Pr\bigl[X \ge k/2\bigr] \;\le\; e^{-2k(1/2-\epsilon)^2}.
\]
A union bound over $j \in [n]$ gives failure probability at most $n\, e^{-2k(1/2-\epsilon)^2}$, which is strictly less than $1$ for $k = O_\epsilon(\log n)$. Hence, there exists a deterministic choice of $u_{i,1},\ldots,u_{i,k}$ for which the majority vote correctly computes $A_{i,j}$ for all $j$.
\end{proof}

We are now ready to prove \Cref{thm:alpha_signrank_separation}

\begin{proof}[Proof of \Cref{thm:alpha_signrank_separation}]
We show by a counting argument that most $n \times n$ sign matrices with $\VCdim \le 2$ have large approximate sign-rank.

Suppose $A \in \{\pm 1\}^{n \times n}$ satisfies $\signrank_\epsilon(A) \le d$. By \Cref{lem:Adleman} with $k = O_\epsilon(\log n)$, there exist vectors $\{v_j\}_{j \in [n]} \subseteq \RR^d$ and, for each $i \in [n]$, vectors $u_{i,1},\ldots,u_{i,k} \in \RR^d$ such that each entry of $A$ is recovered as a majority vote. Define the $nk \times n$ sign matrix $B$ by $B_{(i-1)k+l,\,j} \coloneqq \sign\inp{u_{i,l}}{v_j}$. By construction, $\signrank(B) \le d$, and $A$ is uniquely determined by $B$ via majority votes over consecutive blocks of $k$ rows.

The number of such matrices $B$ is at most the number of $nk \times nk$ sign matrices with sign-rank at most $d$, which by \Cref{lem:signrank_counting} is at most $2^{O(nkd\log(nk))}$. Since $B$ determines $A$, the number of $n \times n$ matrices with $\signrank_\epsilon \le d$ is also at most $2^{O(nkd\log(nk))}$.

By \Cref{lem:VC2_count}, there are at least $2^{\Omega(n^{3/2})}$ sign matrices with $\VCdim \le 2$. For some such matrix to have $\signrank_\epsilon > d$, it suffices that
\begin{equation}
\label{eq:counting_approx_srnk}
2^{O(nkd\log(nk))} \;<\; 2^{\Omega(n^{3/2})},
\end{equation}
which gives $d = \Omega\bigl(\sqrt{n}/(k\log n)\bigr) = \Omega_\epsilon\bigl(\sqrt{n}/\log^2 n\bigr)$.
\end{proof}

The same counting argument yields a lower bound for random matrices.

\begin{remark}[Approximate sign-rank of random matrices]
\label{rem:random_matrices}
Since the total number of $n \times n$ sign matrices is $2^{n^2}$, the bound \eqref{eq:counting_approx_srnk} implies that for a uniformly random $A \in \{\pm 1\}^{n \times n}$,
\[
\Pr\bigl[\signrank_\epsilon(A) \le d\bigr] \;\le\; \frac{2^{O(nkd\log(nk))}}{2^{n^2}} \;=\; o(1)
\]
provided $d \le c_\epsilon\, n/\log^2 n$ for a sufficiently small constant $c_\epsilon > 0$. In other words, the approximate sign-rank of a random $n \times n$ sign matrix is $\Omega_\epsilon(n/\log^2 n)$ with high probability.
\end{remark}

\bibliographystyle{amsalpha}  
\bibliography{refs} 
\end{document}